\newtheorem{corollary}{Corollary}
\newtheorem{theorem}{Theorem}
\newtheorem{assumption}{Assumption}
\newtheorem{lemma}{Lemma}
\newtheorem{remark}{Remark}
\newtheorem{definition}{Definition}
\newenvironment{proof}[1][Proof]{\noindent\textbf{#1.} }{\ \rule{0.5em}{0.5em}}
\begin{document}
%

\title{Cooperative Control of Multiple Agents with  Unknown High-frequency Gain
Signs under Unbalanced and Switching Topologies}
%
%
%

\author{Qingling~Wang,~\IEEEmembership{Member,~IEEE,} Haris~E.~Psillakis,~\IEEEmembership{Member,~IEEE,}
        Changyin~Sun
\thanks{Q. Wang and C. Sun are with the School of Automation, Southeast University, Nanjing, 210096, China, and also with the Key Laboratory of Measurement and Control of Complex Systems of Engineering, Ministry of Education, Nanjing, China (e-mail: csuwql@gmail.com, cysun@seu.edu.cn).}
\thanks{H.~Psillakis is with the School of Electrical and Computer Engineering,
National Technical University of Athens, Athens, 15780, GR (e-mail:
hpsilakis@central.ntua.gr).}
}

\maketitle

\begin{abstract}
Existing  results on cooperative control of multi-agent systems with unknown control directions require that the underlying topology is  either fixed with a strongly connected graph or switching between different strongly connected graphs.  Furthermore, in most cases the graph is assumed to be balanced. This paper proposes a new class of nonlinear PI based algorithms to relax these requirements and allow for unbalanced and switching topologies having a jointly strongly connected basis. This is made possible for single-integrator (SI) and double-integrator (DI) agents with non-identical unknown control directions by a suitable selection of the distributed nonlinear PI functions. 
Moreover, as a special case, the proposed algorithms are applied to strongly connected and fixed graphs.  Finally, simulation examples are given to show the validity of our theoretical results.
\end{abstract}

\begin{IEEEkeywords}
Unknown control directions, multi-agent systems, consensus, nonlinear PI, switching topologies.
\end{IEEEkeywords}

%
\IEEEpeerreviewmaketitle

\section{Introduction}
Multi-agent coordination has attracted intensive research interest over the past decade \cite{olfati2007consensus,renwei2005,wang2010distributedcontrolieeeac,
qin2014discreteieeeac}. Consensus as a fundamental topic \cite{liu2011distributed,yu2010some,wen2012consensusijrnc
}, aims to design algorithms that guarantee collective behaviors by using local neighborhood information. Applications include formation control of unmanned air
vehicles (UAVs), clusters of satellites, self-organization problems  and congestion control in communication networks.

In some control problems such as, the course-keeping controller design of ships \cite{du2007adaptiveieeejoe} or uncalibrated-visual servoing \cite{AstolfiVisualServoing}, the control direction might not be always available a priori.
In order to handle the unknown control directions, the 
 Nussbaum gain technique 
was first proposed in \cite{nussbaum1983somescl}. To date, the Nussbaum gain approach has been extensively employed in various control schemes \cite{xudong1999decentralizedieeeac,ding1998globalieeeac,zhang2000adaptiveieeeac,ge2003robustieeeac,liu2006globalieeeac,yan2010globalsiamsiamjco}.
An alternative approach to the problem,
the so called nonlinear PI based method, 
was later proposed in \cite{ortega2002nonlinearscl}.
Results in \cite{astolfi2007nonlinear,psillakis2016extensionscl,PsillakisMED,psillakis2016integratorejc} indicate that the nonlinear PI based method has better robustness properties for certain types of unmodelled dynamics.

Recently, a few efforts appeared in the literature on cooperative control of multi-agent systems with unknown control directions. In \cite{chen2014adaptiveieeeac}, the consensus of first-order and second-order agents with unknown identical control directions was considered using a novel
Nussbaum function. Nussbaum functions were also employed in \cite{ding2015adaptiveauto,liu2015adaptiveieeeac,su2015cooperativeieeeac,Guo2017regulationieeetac} for cooperative output regulation, in \cite{peng2014cooperativescl,radenkovic2016multi} for SI agents, in \cite{ma2017cooperativeamc,shi2015cooperativeieeease} for high-order agents, and in \cite{wang2016prescribed,wang2017nonlinear} for nonlinear systems.
It is observed that in most cases with the exception of \cite{peng2014cooperativescl,su2015cooperativeieeeac,Guo2017regulationieeetac} the Nussbaum gain approach requires that all the unknown control directions should be the same.
Also, in \cite{chen2016adaptiveieeeac} an adaptive approach was proposed  to relax such a requirement and allow non-identical control directions in which partial control directions should be known.

In a recent paper \cite{psillakis2016consensusieeeac}, using the nonlinear PI based method, the consensus problem for SI and DI agents with non-identical unknown control directions was investigated for the first time under switching topologies. It was shown in \cite{psillakis2016consensusieeeac} that if the switching graphs are balanced and strongly connected, then asymptotic consensus among the agents is ensured with the proposed control laws.
It is worth noting that  the consensus of agents with non-identical unknown control directions  under unbalanced and switching topologies which are not strongly connected  
is to the best of our  knowledge an open problem.

With the above motivations, in this work, we consider the consensus problem of SI and DI agents with non-identical unknown control directions and propose a new class of nonlinear PI based algorithms to allow for unbalanced and switching topologies which are not strongly connected. 
As a special case, we apply the nonlinear PI based algorithms for SI and DI agents under a strongly connected and fixed graph.

The main contributions of this paper are the following.  First, we remove the  balanced graph assumption of  the  work \cite{psillakis2016consensusieeeac}.  We also introduce a new class of switching topologies, namely those having a jointly strongly connected basis,  generalizing the jointly connected property \cite{yu2012adaptiveauto}, \cite{MengYangLiRenWuIEEETAC} to digraphs. Thus, in this work the graphs of switching topologies do not need to be balanced or strongly connected but only to have a jointly strongly connected basis. These extensions are not trivial and have not been considered in the related literature \cite{chen2014adaptiveieeeac,ding2015adaptiveauto,liu2015adaptiveieeeac, su2015cooperativeieeeac,Guo2017regulationieeetac,peng2014cooperativescl,radenkovic2016multi, ma2017cooperativeamc,shi2015cooperativeieeease,wang2016prescribed,wang2017nonlinear, chen2016adaptiveieeeac,psillakis2016consensusieeeac}. 
Our  results are obtained with the introduction of suitable novel nonlinear PI terms and a new technical analysis (Section \ref{section3} and Lemma \ref{lemmaswitch}). Second, with the proposed algorithms,  the consensus problem of agents with non-identical unknown control directions under a strongly connected and fixed graph is also tackled as a special case.

The rest of this paper is organized as follows.  
In Section \ref{section2}, some preliminaries 
are given and  basic lemmas and definitions are presented. Also, the problem under study is formulated.
In Section \ref{section3}, a new class of nonlinear PI based algorithms is proposed and   the main results of the paper (Theorems \ref{theoremsiswitching}-\ref{theoremdiswitching}) are proved. Two examples are considered to verify the obtained results in Section \ref{section4}. Section \ref{section5} concludes this paper with some remarks.

\textbf{Notations:} $\mathcal{L}_{\infty }$ and $\mathcal{L}_{2}$ are the spaces of bounded signals and square integrable signals, respectively. For $x\in\mathbb{R}$ we denote by $\lfloor x\rfloor$ the largest integer smaller than or equal to $x$.
\section{Problem formulation and preliminaries} \label{section2}

\subsection{Preliminaries}
Let $\left\{ t_{j}\right\} _{j\in I}$ be the finite or infinite sequence of discontinuity points of a piecewise continuous function with index set $%
I=\left\{ 1,2,\ldots \right\} \subseteq\mathbb{N}_{+}$ and denote $t_{n_0+1}=+\infty $ if $I$ has
finite cardinality $\mathrm{card}(I)=n_0$.
\begin{definition}
\label{definitionpiecewise}Consider a real-valued piecewise right continuous
function $f:\left[ 0,\infty \right) \rightarrow\mathbb{R}$ and let $\left\{ t_{j}\right\} _{j\in I}$ be the sequence of discontinuity points. The function $f(\cdot )$ is said to be uniformly piecewise right
continuous if for any $\epsilon >0$ there exists $\delta \left( \epsilon
\right) >0$ such that%
\begin{equation*}
\left\vert f\left( \overline{t}_{2}\right) -f\left( \overline{t}_{1}\right)
\right\vert \leq \epsilon
\end{equation*}%
for $\overline{t}_{1},$ $\overline{t}_{2}\in \left[ t_{j},t_{j+1}\right)
,$ $j\in I$, with $\left\vert \overline{t}_{2}-\overline{t}%
_{1}\right\vert \leq \delta \left( \epsilon \right) $.
\end{definition}

Two useful lemmas from \cite{psillakis2016consensusieeeac} are
introduced as follows:



\begin{lemma}
\label{lemmageneralizaiton}Consider a piecewise right continuous
differentiable function $\phi :\left[ 0,\infty \right) \rightarrow\mathbb{R}$, and let $\left\{ t_{j}\right\} _{j\in I}$ be the sequence of discontinuity points with $I\subseteq\mathbb{N}$. Suppose that $\phi $ has a  bounded derivative except at the points $t_j$ ($j\in I$) and $\lim_{t\rightarrow \infty }\int_{0}^{t}\phi (s)ds$ exists and is finite. If
there exists $\tau >0$ such that $t_{j+1}-t_{j}>\tau $ for $j\in I$ then
$\lim_{t\rightarrow \infty }\phi (t)=0$.
\end{lemma}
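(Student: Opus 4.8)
The plan is to adapt the classical Barbalat argument, reasoning by contradiction, with the dwell-time bound $\tau$ playing exactly the role that uniform continuity plays in the standard proof. First I would record the consequence of the bounded-derivative hypothesis: if $|\dot\phi(t)|\le M$ for all $t\neq t_j$, then $\phi$ is Lipschitz with the common constant $M$ on each continuity piece, i.e. $|\phi(\bar t_2)-\phi(\bar t_1)|\le M|\bar t_2-\bar t_1|$ whenever $\bar t_1,\bar t_2\in[t_j,t_{j+1})$. In particular $\phi$ is uniformly piecewise right continuous in the sense of Definition~\ref{definitionpiecewise}, and this is all the regularity the argument will use.

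Next I would suppose, for contradiction, that $\phi(t)\not\to 0$. Then there exist $\epsilon>0$ and a sequence $s_n\to\infty$ with $|\phi(s_n)|\ge\epsilon$. For each $n$ let $[t_{j(n)},t_{j(n)+1})$ be the continuity piece containing $s_n$; by hypothesis its length exceeds $\tau$. Setting $d:=\epsilon/(2M)$ and invoking the Lipschitz estimate together with $|\phi(s_n)|\ge\epsilon$, I get $|\phi(t)|\ge\epsilon/2$ for every $t$ in that piece satisfying $|t-s_n|\le d$.

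The crux is to exhibit, inside this piece, an interval $J_n$ on which $|\phi|\ge\epsilon/2$ and whose length is bounded below uniformly in $n$. This is precisely where the dwell time is indispensable: since $t_{j(n)+1}-t_{j(n)}>\tau$, the point $s_n$ lies at distance greater than $\tau/2$ from at least one of the two endpoints, so on that side there is room to displace $s_n$ by $\eta:=\min(\tau/2,d)$ while remaining inside the same continuity piece. Taking $J_n$ to be the resulting interval of length $\eta$, I have $|\phi|\ge\epsilon/2$ throughout $J_n$; and since $\phi$ is continuous and non-vanishing on $J_n$, it keeps a constant sign there, whence $\bigl|\int_{J_n}\phi(s)\,ds\bigr|=\int_{J_n}|\phi(s)|\,ds\ge(\epsilon/2)\,\eta>0$.

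Finally I would close the argument using the convergence of $\int_0^t\phi$. Writing $J_n=[u_n,v_n]$, the left endpoint satisfies $u_n\ge s_n-d\to\infty$, so by the Cauchy criterion for the finite limit $\lim_{t\to\infty}\int_0^t\phi$ one has $\int_{J_n}\phi=\int_0^{v_n}\phi-\int_0^{u_n}\phi\to 0$, contradicting the uniform lower bound $(\epsilon/2)\eta>0$. Hence $\phi(t)\to 0$. The main obstacle is the third paragraph: producing a bad interval of uniformly positive length near a point that may sit arbitrarily close to a jump; it is surmounted entirely by exploiting the lower bound $\tau$ on the inter-switching times, which guarantees a safe side on which to stay within a single continuity piece.
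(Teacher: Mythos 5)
Your proof is correct: the contradiction argument, the Lipschitz bound from the bounded derivative, the choice of a bad interval of uniform length $\min(\tau/2,\epsilon/2M)$ on the ``safe side'' of $s_n$ within its continuity piece, and the Cauchy criterion for the convergent integral all check out. Note that the paper itself offers no proof of this lemma --- it is imported verbatim from \cite{psillakis2016consensusieeeac} --- so there is nothing in-paper to compare against; your argument is the standard Barbalat-by-contradiction proof with the dwell-time condition $t_{j+1}-t_j>\tau$ substituting for uniform continuity, which is exactly the intended mechanism.
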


\begin{lemma}
\label{lemmabounded}Let $M:\left[ 0,t_{f}\right) \rightarrow\mathbb{R}$ be a piecewise right-continuous function, and $S:\left[ 0,t_{f}\right)
\rightarrow\mathbb{R}$ is a continuous, piecewise differentiable function such that%
\begin{equation*}
\dot{S}(t)=\left[ \alpha _{1}+\alpha _{2}S(t)\cos (S(t))\right]
M(t)
\end{equation*}%
where $\alpha _{1}$ and $\alpha _{2}$ are two constants. If $\alpha _{2}\neq
0$, then $\left\vert S(t)-S(0)\right\vert \leq 2\left( \pi +\left\vert
\alpha _{1}/\alpha _{2}\right\vert \right) $ for $t\in \left[
0,t_{f}\right) $.
\end{lemma}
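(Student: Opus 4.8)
The plan is to treat this scalar equation as a one–dimensional phase–line problem in which the time–varying factor $M(t)$ merely rescales time, while the \emph{sign} structure of the motion is dictated by the bracketed term. Since $\alpha_2\neq0$ I set $c=\alpha_1/\alpha_2$ and $\rho=|c|=|\alpha_1/\alpha_2|$, and factor the right–hand side as $\dot S=\alpha_2\,h(S)\,M(t)$, where $g(S):=\alpha_1+\alpha_2 S\cos S=\alpha_2\,h(S)$ and $h(S):=c+S\cos S$. The equilibria of the equation are thus exactly the zeros of the $C^1$ function $h$. The idea is to show that the trajectory can never cross a zero of $h$, so that $S(t)$ stays trapped in the open interval $(u_-,u_+)$ delimited by the two zeros of $h$ immediately below and above $S(0)$, and then to bound the distance from $S(0)$ to each of these zeros by $2(\pi+\rho)$.

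First I would localize the zeros of $h$. Because $|S\cos S|\le|S|$, any root satisfies $|S|\ge\rho$, so $h$ has no zero in the central gap $(-\rho,\rho)$. Outside this gap I evaluate $h$ at consecutive integer multiples of $\pi$, using $h(2m\pi)=c+2m\pi$ and $h((2m+1)\pi)=c-(2m+1)\pi$: for a multiple $k\pi$ lying outside $(-\rho,\rho)$ the value at an even multiple carries the sign of $k\pi$, while the value at an odd multiple carries the opposite sign. Hence any two consecutive multiples of $\pi$ that both lie outside $(-\rho,\rho)$ give $h$ opposite signs, and by the intermediate value theorem there is a root of $h$ between them. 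Since every interval of length $2\pi$ contains two consecutive multiples of $\pi$, a short case analysis shows that from any point there is a zero of $h$ within distance $2(\pi+\rho)$ both above and below it; the extremal configuration is $S(0)$ sitting at the edge of the central gap, where one must cross the gap of width $2\rho$ and then travel up to $2\pi$ further to reach the first sign change. Consequently $u_+-S(0)\le 2(\pi+\rho)$ and $S(0)-u_-\le 2(\pi+\rho)$ (if $S(0)$ is itself a zero the claim is immediate).

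Next I establish the non–crossing property. Suppose, for contradiction, that the trajectory leaves $(u_-,u_+)$, and let $t^{\ast}\in(0,t_f)$ be the first time it reaches an endpoint, say $S(t^{\ast})=u_+$ (the endpoint $u_-$ is symmetric). On $[0,t^{\ast})$ put $w(t)=u_+-S(t)>0$. Using $g(u_+)=0$ and the Lipschitz bound $|g(S)-g(u_+)|\le K|S-u_+|$ with $K=\max_{[u_-,u_+]}|g'|<\infty$, one obtains $|\dot w|=|g(S)M|\le K|M(t)|\,w$ wherever $S$ is differentiable. Integrating this differential inequality (legitimate since $S$, and hence $w$, is continuous and piecewise differentiable) gives $w(t)\ge w(0)\exp(-K\int_0^t|M(s)|ds)$. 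Letting $t\to t^{\ast-}$ forces $w(t)\to0$, whereas the right–hand side stays bounded away from zero because $\int_0^{t^{\ast}}|M|<\infty$; this contradiction shows $S(t)\in(u_-,u_+)$ for all $t\in[0,t_f)$. Combined with the localization bounds this yields $|S(t)-S(0)|<\max(u_+-S(0),\,S(0)-u_-)\le 2(\pi+|\alpha_1/\alpha_2|)$, which is the assertion.

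The step I expect to be the main obstacle is the trapping (non–crossing) argument, since its validity rests on the local integrability of $M$: the Gr\"onwall estimate excludes reaching a zero of $g$ in finite time only while $\int_0^{t}|M|<\infty$ for $t<t_f$, a property I take as implicit in the piecewise right–continuity of $M$ and in the well–posedness of the given solution. A secondary, purely bookkeeping difficulty is verifying the uniform spacing bound $2(\pi+\rho)$ for every position of $S(0)$ relative to the central gap $(-\rho,\rho)$; the constant is sharp exactly in the extremal configuration described above.
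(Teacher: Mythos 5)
First, a caveat on the comparison itself: the paper does not prove Lemma~\ref{lemmabounded} at all --- it is imported verbatim from \cite{psillakis2016consensusieeeac} --- so your argument can only be judged on its own merits. Your overall strategy (trap the trajectory between the two zeros $u_\pm$ of $h(S)=\alpha_1/\alpha_2+S\cos S$ adjacent to $S(0)$, which cannot be crossed by a Gr\"onwall/uniqueness argument, and then bound the width of that zero-free gap) is the natural one, and the non-crossing step is correct modulo the local integrability of $M$ that you rightly flag (it holds in the paper's application, where $M$ is built from a continuous solution).

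The genuine gap is the localization step, and it is not mere bookkeeping. Your intermediate-value argument certifies a zero of $h$ in $[k\pi,(k+1)\pi]$ only when \emph{both} endpoints lie outside $(-\rho,\rho)$; if $S_0\in(-\rho-2\pi,-\rho)$, the first such interval above $S_0$ may lie entirely above $+\rho$, so the zero it produces is only within distance $(\rho+2\pi)-S_0\le 2\rho+4\pi$ of $S_0$, not $2\rho+2\pi$ (and symmetrically for the zero below). Worse, the claim ``a zero within $2(\pi+\rho)$ on each side'' is false, not just unproven: take $\alpha_2=1$, $\alpha_1=-(2\pi+0.1)$, so $\rho=2\pi+0.1\approx 6.383$. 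The local maximum of $S\cos S$ near $2\pi$ is only $\approx 6.361<\rho$, so $h$ has no zero near $+2\pi$; its two zeros adjacent to the origin sit at $\approx -8.68$ and $\approx 11.58$, a zero-free gap of length $\approx 20.26$, while $2(\pi+\rho)\approx 19.05$. With $S(0)=-8.6$ and $M\equiv -1$ the solution increases monotonically toward $11.58$, so $\sup_t|S(t)-S(0)|\approx 20.18$ exceeds the stated bound. Hence the constant in the lemma as quoted from \cite{psillakis2016consensusieeeac} cannot be recovered by this route and appears itself to be too small; what your argument does yield, once the case analysis is repaired, is $|S(t)-S(0)|\le 2\bigl(2\pi+|\alpha_1/\alpha_2|\bigr)$, and this weaker but correct bound is all that Theorems~\ref{theoremsiswitching} and~\ref{theoremdiswitching} actually use, since only boundedness of $S_i$ and $R_i$ is needed downstream.
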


The following Lemma will also be important for the subsequent analysis.
\begin{lemma}\label{lemmafdotconvergence}
Consider a real-valued continuous function $f:[0,\infty)\rightarrow\mathbb{R}$ with a uniformly piecewise right continuous derivative and let $\{t_j\}_{j\in I}$ be the sequence of discontinuity points of $\dot{f}$. If there exists $\tau>0$ such that $t_{j+1}-t_j>\tau$ for $j\in I$ and $\lim_{t\rightarrow\infty}[f(t)\dot{f}(t)]=0$ then $\lim_{t\rightarrow\infty}\dot{f}(t)=0$.
\end{lemma}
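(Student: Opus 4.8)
The plan is to argue by contradiction and exploit the elementary fact that along any sequence where $\dot f$ stays bounded away from zero, the hypothesis $f\dot f\to 0$ forces $f$ itself to vanish. Suppose $\dot f(t)\not\to 0$. Then there exist $\epsilon>0$ and a sequence $s_k\to\infty$ with $|\dot f(s_k)|\ge\epsilon$; passing to a subsequence I may assume the $\dot f(s_k)$ all share one sign, say $\dot f(s_k)\ge\epsilon$ (the opposite sign is handled by replacing $f$ with $-f$, which leaves both the hypothesis $f\dot f\to 0$ and the desired conclusion invariant). Since $|f(s_k)|\le |f(s_k)\dot f(s_k)|/\epsilon$ and $f(s_k)\dot f(s_k)\to 0$, I immediately obtain $f(s_k)\to 0$.

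Next I would combine the uniform piecewise right continuity of $\dot f$ (Definition \ref{definitionpiecewise}) with the dwell-time bound $t_{j+1}-t_j>\tau$ to produce, near each $s_k$, a subinterval of \emph{fixed} length on which $\dot f$ retains a definite positive value. Applying Definition \ref{definitionpiecewise} with $\epsilon/2$ yields a $\delta>0$, independent of $k$, such that $\dot f$ varies by at most $\epsilon/2$ over any subinterval of length $\le\delta$ contained in a single continuity interval $[t_{j_k},t_{j_k+1})$. Setting $d:=\min\{\delta,\tau/2\}$ and observing that $s_k$ has at least $\tau/2$ of room on one side of its continuity interval (whose length exceeds $\tau$), I obtain either $[s_k,s_k+d]$ or $[s_k-d,s_k]$, lying inside the same continuity interval, on which $\dot f\ge\epsilon/2$.

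On such a subinterval $f$ is strictly increasing, so the fundamental theorem of calculus gives a change of at least $(\epsilon/2)d$. In the right-hand case this forces $f(s_k+d)\ge f(s_k)+(\epsilon/2)d$ together with $\dot f(s_k+d)\ge\epsilon/2$; since $f(s_k)\to 0$, for all large $k$ this yields $f(s_k+d)\dot f(s_k+d)\ge(\epsilon^2/8)d>0$, contradicting $f\dot f\to 0$ along $s_k+d\to\infty$. The left-hand case is identical up to sign, producing $f(s_k-d)\dot f(s_k-d)\le -(\epsilon^2/8)d<0$. Either contradiction establishes $\dot f(t)\to 0$.

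I expect the main obstacle to be the geometry of this last step: $s_k$ may lie arbitrarily close to a discontinuity of $\dot f$, so I cannot simply center a fixed interval at $s_k$. The two structural hypotheses are exactly what resolve this—the dwell-time $\tau$ guarantees adequate room on at least one side, and the \emph{uniform} (as opposed to merely pointwise) piecewise continuity guarantees that the controlling $\delta$ does not shrink as $k\to\infty$. A minor additional point to verify is that the fundamental theorem of calculus applies on each continuity interval, which holds because $f$ is continuous with piecewise continuous derivative $\dot f$.
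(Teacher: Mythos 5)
Your proof is correct and follows essentially the same route as the paper's: argue by contradiction, use the dwell time $\tau$ together with the \emph{uniform} piecewise right continuity to pin down a fixed-length one-sided interval on which $\dot f$ stays above $\epsilon/2$, and then show the resulting growth of $f$ over that interval is incompatible with $f\dot f\to 0$. The only cosmetic difference is that you derive the contradiction from the product $f\dot f$ at the far endpoint via the fundamental theorem of calculus (after a sign normalization), whereas the paper bounds $|f|$ uniformly on the whole subinterval and contradicts that bound via the mean value theorem.
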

\begin{proof}
The proof is given in Appendix \ref{appendix_proof_of_Lemma3}.
\end{proof}

In what follows, we revisit basic definitions on graph theory.  A
directed graph is denoted by $\mathcal{G}=\left( \mathcal{V},\mathcal{E},\mathcal{A}\right)$, where $\mathcal{V}=\left\{ v_{1},v_{2},\ldots ,v_{N}\right\}
$ represents the finite and nonempty set of nodes, and $\mathcal{E}\subseteq \mathcal{%
V}\times \mathcal{V}$ is the set of edges. $\mathcal{A}=\left[ a_{ik}\right] \in\mathbb{R}
^{N\times N}$ is the adjacency matrix, where $a_{ik}$ represents the
coupling strength of edge $\left( k,i\right)$ with $a_{ik}>0$ if $\left( k,i\right)$ belongs to $\mathcal{G}$ and $a_{ik}=0$ otherwise. The union $\mathcal{G}_1\cup\mathcal{G}_2$ of two graphs $\mathcal{G}_1$, $\mathcal{G}_2$ with $\mathcal{G}_p=\left( \mathcal{V}_p,\mathcal{E}_p,\mathcal{A}_p\right) $ ($p=1,2$) is defined as a new graph with vertices $\mathcal{V}_1 \cup\mathcal{V}_2$ and edges $\mathcal{E}_1\cup\mathcal{E}_2$. A union adjacency matrix can also be defined but we leave this definition out since it is not needed in our analysis. 
Denote by $N_{i}=\left\{ k\in \mathcal{V}:\left( k,i\right) \in
\mathcal{E}\right\} $ the set of node $i$'s neighbors.
Let $d_{i}=\sum_{k=1}^{N}a_{ik}$ be the in-degree of vertex $i$, and denote by $D=\mathrm{%
diag}\left\{ d_{1},\ldots ,d_{N}\right\} $ the in-degree matrix. Then the Laplacian matrix is defined as $L=D-\mathcal{A}$. The directed path with length $l$ is defined with a sequence of edges in the form $\left( \left( i_{1},i_{2}\right) ,\left(
i_{2},i_{3}\right) ,\ldots ,\left( i_{l},i_{l+1}\right) \right) $ where $%
\left( i_{m},i_{m+1}\right) \in \mathcal{E}$ for $m=1,\ldots ,l$ and $%
i_{m}\neq i_{n}$ for $m,n=1,\ldots ,l$ and $m\neq n$.
If there exists a directed path between any two distinct nodes in a directed graph $\mathcal{G}$, the graph is said to be strongly
connected.

\begin{definition}
\label{definitionscc} \cite{StanoevSmilkovSpringer} A \emph{basis bicomponent} of a directed graph $\mathcal{G}$ is a strongly connected subgraph of $\mathcal{G}$ with no incoming links from other nodes of $\mathcal{G}$.
\end{definition}
\begin{remark}
The concept of basis bicomponent is an important one in graph theory. In \cite{chebotarev2014forestieeeac} and \cite{agaev2000matrixarc}, the number of basis bicomponents $d$ is shown to be equal to the out-forest complexity of a directed graph and ${rank}(L)=N-d$. Moreover, Chebotarev and Agaev proved in \cite{chebotarev2014forestieeeac} (Corollary 1) that the standard consensus  protocol $\dot{x}=-Lx$ ensures that all vertices in a basic bicomponent reach consensus for any Laplacian matrix $L$.
\end{remark}


\begin{lemma}
\label{ltrans_scc_lemma} Consider the directed graph $\mathcal{G}=\left(\mathcal{V},\mathcal{E}, \mathcal{A}\right)$ which has a basis bicomponent $\mathcal{G}_{b}=\left(\mathcal{V}_{b},\mathcal{E}_{b}, \mathcal{A}_{b}\right)$ with $\mathcal{V}_{b}:=\{i_1,i_2,\cdots,i_r\}\subset \mathcal{V}$. Denote by $L_r$  the reduced  matrix which is obtained by deleting all columns and rows of the original Laplacian matrix $L$ that correspond to nodes not included in the subgraph. Then, $L_r$ is a Laplacian matrix for the strongly connected graph defined by $\{i_1,i_2,\cdots,i_r\}$. Denote by $\omega_r:=\left[ \omega _{1},\omega _{2},\ldots ,\omega _{r}\right] ^{T}$ the left eigenvector   of $L_r$ associated with the zero eigenvalue. Then, it holds true that
\begin{align*}
\sum_{m=1}^{r}\sum_{n=1}^{r}\omega _{m}a_{i_m,i_n}\zeta _{m}&\left( \zeta
_{m}-\zeta _{n}\right) \nonumber\\
&=\frac{1}{2}\sum_{m=1}^{r}\sum_{n=1}^{r}\omega
_{m}a_{i_m,i_n}\left( \zeta _{m}-\zeta _{n}\right) ^{2}
\end{align*}%
for all $\zeta _{m}\in\mathbb{R}$ with $\omega _{m}>0$ for every $m\in \{1,\ldots,r\}$ and $%
\sum_{m=1}^{r}\omega _{m}=1$.
\end{lemma}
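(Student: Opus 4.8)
The plan is to split the argument into two parts: first establishing that $L_r$ is genuinely the Laplacian of a strongly connected graph (so that the left eigenvector $\omega_r$ is well defined with positive entries), and then verifying the quadratic identity by a short symmetrization combined with the left-eigenvector relation.

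First I would exploit the defining property of a basis bicomponent, namely that it has no incoming links from nodes of $\mathcal{G}$ outside $\mathcal{V}_b$. This means that for every $i_m\in\mathcal{V}_b$ all of its in-neighbors already lie in $\mathcal{V}_b$, so $a_{i_m,k}=0$ for every $k\notin\mathcal{V}_b$. Consequently $d_{i_m}=\sum_{k=1}^{N}a_{i_m,k}=\sum_{n=1}^{r}a_{i_m,i_n}$; that is, the in-degree of $i_m$ in $\mathcal{G}$ coincides with its in-degree inside the induced subgraph. Hence the diagonal entries of $L_r$ equal the in-degrees of the induced subgraph while the off-diagonal entries are $-a_{i_m,i_n}$, so each row of $L_r$ sums to zero and $L_r$ is exactly the Laplacian of the strongly connected subgraph on $\{i_1,\ldots,i_r\}$. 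Because this subgraph is strongly connected, $L_r$ is irreducible, so zero is a simple eigenvalue and the associated left eigenvector can be taken entrywise positive and normalized with $\sum_{m=1}^{r}\omega_m=1$, which justifies the hypotheses $\omega_m>0$. Reading off the $n$-th component of $\omega_r^{T}L_r=0$ and using $a_{i_n,i_n}=0$ then yields the balance-type relation
\begin{equation*}
\omega_n d_{i_n}=\sum_{m=1}^{r}\omega_m a_{i_m,i_n},\qquad n=1,\ldots,r,
\end{equation*}
which generalizes the balanced-graph condition to the weighting $\omega_r$ and is the structural fact that drives the identity.

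For the identity itself I would work with the difference of the two sides. Writing $b_{mn}:=a_{i_m,i_n}$ and using $2\zeta_m(\zeta_m-\zeta_n)-(\zeta_m-\zeta_n)^2=(\zeta_m-\zeta_n)(\zeta_m+\zeta_n)$, a direct computation gives
\begin{equation*}
2\Bigl[\sum_{m,n}\omega_m b_{mn}\zeta_m(\zeta_m-\zeta_n)-\tfrac{1}{2}\sum_{m,n}\omega_m b_{mn}(\zeta_m-\zeta_n)^2\Bigr]=\sum_{m,n}\omega_m b_{mn}(\zeta_m^2-\zeta_n^2).
\end{equation*}
It then remains to show the right-hand sum vanishes. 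Splitting it into $\sum_m\omega_m\zeta_m^2\sum_n b_{mn}$ minus $\sum_n\zeta_n^2\sum_m\omega_m b_{mn}$, the first double sum equals $\sum_m\omega_m\zeta_m^2 d_{i_m}$ by the in-degree identity of the first part, while the second equals $\sum_n\zeta_n^2\omega_n d_{i_n}$ by the balance-type relation above. These two expressions are identical (up to renaming the summation index), so their difference is zero and the claimed identity follows.

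The main obstacle is the first part: one must carefully invoke the ``no incoming links'' property to argue that the restriction $L_r$ is again a valid Laplacian rather than merely a principal submatrix with nonzero row sums, since a general principal submatrix of a Laplacian does not inherit the zero-row-sum property. Only with this property does the left null vector $\omega_r$ exist with positive entries and produce the balance-type relation; the second part is then a routine symmetrization.
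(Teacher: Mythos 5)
Your proof is correct. The first part coincides with the paper's argument: both use the defining ``no incoming links'' property of a basis bicomponent to conclude that $a_{i_m,k}=0$ for $k\notin\mathcal{V}_b$, hence that the diagonal entries of $L$ in the rows indexed by $\mathcal{V}_b$ equal $\sum_{n=1}^{r}a_{i_m,i_n}$, so that the principal submatrix $L_r$ retains the zero-row-sum property and is a genuine Laplacian of the strongly connected subgraph; you are right that this is the step a careless reader might skip, since a generic principal submatrix of a Laplacian is not a Laplacian. For the quadratic identity the two proofs diverge: the paper simply defers to Lemma 7.7 of the Lewis et al.\ monograph, whereas you give a self-contained derivation, reducing the difference of the two sides to $\sum_{m,n}\omega_m a_{i_m,i_n}(\zeta_m^2-\zeta_n^2)$ via the algebraic identity $2\zeta_m(\zeta_m-\zeta_n)-(\zeta_m-\zeta_n)^2=\zeta_m^2-\zeta_n^2$, and then killing this sum with the balance relation $\omega_n d_{i_n}=\sum_{m}\omega_m a_{i_m,i_n}$ extracted from $\omega_r^{T}L_r=0$ together with $\sum_n a_{i_m,i_n}=d_{i_m}$. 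Your computation checks out and is essentially the content of the cited lemma, so your version buys the reader a complete, citation-free argument at the cost of a few extra lines; the paper's version is shorter but opaque without the external reference. Your additional remarks on irreducibility of $L_r$ guaranteeing a simple zero eigenvalue with a positive left eigenvector are accurate and justify the hypotheses $\omega_m>0$, $\sum_m\omega_m=1$ that the paper states without comment.
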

\begin{proof}
The proof is given in Appendix \ref{appendix_proof_of_ltrans}.
\end{proof}

We define now the new property of jointly strongly connected basis.
\begin{definition}
\label{definitionjscg} Consider a group  of graphs $\mathcal{G}_\ell=\left( \mathcal{V},\mathcal{E}_\ell, \mathcal{A}_\ell\right)$  ($\ell=1,\cdots,M$) defined over the same set of vertices $\mathcal{V}$ with each graph $\mathcal{G}_\ell$  having the basis bicomponents $\mathcal{G}_{b,\ell 1},\cdots, \mathcal{G}_{b,\ell d_\ell}$  ($\ell=1,\cdots,M$). The graph group is said to have a \emph{jointly strongly connected basis} if the union $\cup_{\ell=1}^{M}\cup_{j=1}^{d_\ell}\mathcal{G}_{b,\ell j}$ of all the graph basis bicomponents forms a strongly connected graph having as vertices all elements of $\mathcal{V}$.
\end{definition}

The above definition is illustrated in  Fig. \ref{Figure_scc_example}. Note that even though graph $\mathcal{G}_2$ has a strongly connected subgraph defined by agents 1 and 2, a basis bicomponent does not exist due to the incoming link to agent 2 from agent 4. From the union of the respective basis bicomponents one can 
deduce that the group of graphs $\mathcal{G}_{1},\mathcal{G}_{2},\mathcal{G}_{3}$ has a jointly strongly connected basis.
\begin{figure}
\centering
\includegraphics[width=0.65\columnwidth]{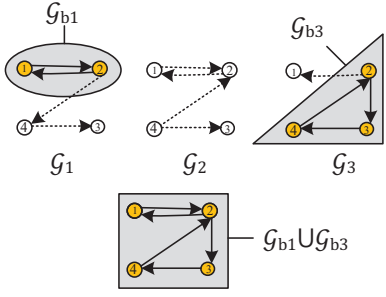}
\caption{Graphs $\mathcal{G}_1$, $\mathcal{G}_2$, $\mathcal{G}_3$, their basis  bicomponents $\mathcal{G}_{b1}$, $\mathcal{G}_{b3}$ and the strongly connected graph $\mathcal{G}_{b1}\cup\mathcal{G}_{b3}$.}
\label{Figure_scc_example}
\end{figure}
\begin{remark}
The jointly strongly connected basis property can be regarded as a generalization for directed graphs of the widely known concept of jointly connected topologies \cite{yu2012adaptiveauto}, \cite{MengYangLiRenWuIEEETAC}. Note that for undirected topologies, each group of connected agents forms a basis bicomponent since there are no incoming/outcoming links among separate groups.
\end{remark}

Consider now a graph that  switches between different elements of a jointly strongly connected basis as described in the following assumption:
\begin{assumption}
\label{Assumtopologies}Let $\left\{ t_{j}\right\} _{j=1}^{\infty}$ be a
 sequence of switching times and consider a set of network topologies which are
given by the Laplacian matrices $\left\{ L_{\ell}\right\} _{\ell=1}^{M}$ and a
mapping $\bar{n}:\mathbb{N}_+\rightarrow \left\{ 1,2,\cdots ,M\right\} $ such that $%
L(t)=L_{\bar{n}(j)}$ for all $t\in \left[ t_{j},t_{j+1}\right) ,$ $j\in \mathbb{N}_+$.
Assume that the switching topologies $\left\{ L_{\ell}\right\} _{\ell=1}^{M}$  have a jointly strongly connected basis  and there exists some unknown constant $\tau_{min} $ such that $t_{j+1}-t_{j}>\tau_{min} $ for all $j\in \mathbb{N}_+$. Also, for each topology $L_{\ell}$ there exist sequences of activation and deactivation times $\{T_{\ell \nu}^a\}_{\nu=1}^{\infty},\{T_{\ell \nu}^d\}_{\nu=1}^{\infty}\subset\left\{ t_{j}\right\} _{j=1}^{\infty}$ and constant $\tau_{\max}>0$ such that $L(t)=L_{\ell}$ for $t\in[T_{\ell \nu}^a,T_{\ell \nu}^d)$ ($\nu\in\mathbb{N}_+$) and
\begin{equation*}
   T_{\ell,\nu+1}^a-T_{\ell \nu}^d\leq \tau_{max}
\end{equation*}
for all $\ell=1,\cdots,M, \nu\in\mathbb{N}_+$.
\end{assumption}
\begin{remark}\label{remarkassumptiontopologies}
Assumption \ref{Assumtopologies} states that every topology $L_{\ell}$ is reactivated (at $T_{\ell,\nu+1}^a$) within time less than or equal to $\tau_{\max}$ from its previous deactivation time $T_{\ell \nu}^d$ and the time between two consecutive switchings is greater than or equal to $\tau_{\min}$.
\end{remark}

The following Lemma is a central result for consensus under switching topologies having a jointly strongly connected basis:
\begin{lemma}\label{lemmaswitch}
Consider a group of switching topologies described by Assumption \ref{Assumtopologies} and $N$ continuously differentiable almost everywhere (except $\{t_j\}_{j=1}^{\infty}$) functions $y_i:[0,\infty)\rightarrow\mathbb{R}$. Define $\xi(t)=L(t)y(t)$ with $\xi=[\xi_1,\xi_2,\cdots,\xi_N]^T$ and $y=[y_1,y_2,\cdots,y_N]^T$. If $\lim_{t\rightarrow\infty}y_i(t)\xi_i(t)=0$ and $\lim_{t\rightarrow\infty}\dot{y}_i(t)=0$ then $\lim_{t\rightarrow\infty}(y_i(t)-y_k(t))=0$ for all $i,k\in\{1,2,\cdots,N\}$.
\end{lemma}
\begin{proof}
The proof is given in Appendix \ref{appendix_proof_of_Lemmaswitch}.
\end{proof}

\subsection{Problem formulation}

Consider either $N$ SI agents with state
$x_{i}\in\mathbb{R}$ and dynamics
\begin{equation}
\dot{x}_{i}=b_{i}u_{i},\quad i=1,2,\dots ,N,  \label{agentdynamicssi}
\end{equation}%
or $N$ DI agents with position $x_{i}\in\mathbb{R}$, velocity $v_{i}\in\mathbb{R}$ and dynamics
\begin{equation}
\left\{
\begin{array}{l}
\dot{x}_{i}=v_{i} \\
\dot{v}_{i}=b_{i}u_{i}%
\end{array}%
\right. ,\quad i=1,2,\dots ,N,  \label{agentdynamicsdi}
\end{equation}
where $u_{i}\in\mathbb{R}$ is the control input and $b_{i}\in\mathbb{R}$ is the control gain with the unknown sign.

\begin{assumption}
\label{assumcontrolgain}The control gains $b_{i},$ $i=1,2,\ldots ,N$ are
unknown and nonzero constants.
\end{assumption}

\begin{remark}
The assumption  $b_{i}\neq 0$ for all $i=1,2,\ldots ,N$ is necessary for the controllability of each agent dynamics. The signs of the gains $b_i$ may be different and their prior knowledge is no longer needed.
\end{remark}

The design objective is to propose a new class of distributed control algorithms for
agents (\ref{agentdynamicssi}) or (\ref{agentdynamicsdi}) under Assumptions \ref{Assumtopologies}  and
\ref{assumcontrolgain}   such that either
\begin{equation}
\lim_{t\rightarrow \infty }\left( x_{i}(t)-x_{k}(t)\right) =0
\label{objectivesi}
\end{equation}%
for SI agents with $i,$ $k\in \left\{ 1,2,\ldots ,N\right\} $, or%
\begin{equation}
\left\{
\begin{array}{c}
\lim_{t\rightarrow \infty }\left( x_{i}(t)-x_{k}(t)\right) =0 \\
\lim_{t\rightarrow \infty }\left( v_{i}(t)-v_{k}(t)\right) =0%
\end{array}%
\right.   \label{objectivedi}
\end{equation}%
for DI agents with $i,$ $k\in \left\{ 1,2,\ldots ,N\right\} $.

\section{Main results} \label{section3}

\subsection{Consensus algorithm design for SI agents}

For SI agents (\ref{agentdynamicssi}), we define $e_{i}(t):=\sum_{k=1}^{N}a_{ik}(t)\left( x_{i}(t)-x_{k}(t)\right) $. The main result is
stated as follows:

%
\begin{theorem}
\label{theoremsiswitching}
Consider a network of SI agents (\ref{agentdynamicssi}) satisfying Assumption \ref{assumcontrolgain} with switching
topologies described by Assumption \ref{Assumtopologies}. The consensus
problem (\ref{objectivesi}) is solved if the distributed control algorithms
are designed by
\begin{eqnarray}
u_{i}(t) &=&S_{i}(t)\cos (S_{i}(t))e_{i}(t) \notag  \\
&&\times \left[ \lambda _{1}x_{i}(t)e_{i}(t)+\lambda
_{2}\int_{0}^{t}x_{i}(s)e_{i}(s)ds\right]  \label{SIcontroller}
\end{eqnarray}%
with
\begin{equation}
S_{i}(t) =\frac{x_{i}^{2}(t)}{2}+\lambda
_{1}\int_{0}^{t}x_{i}^{2}(s)e_{i}^{2}(s)ds +\frac{\lambda _{2}}{2}\left[ \int_{0}^{t}x_{i}(s)e_{i}(s)ds\right] ^{2} \label{SIparameter1}
\end{equation}
where $\lambda _{1}>0$ and $\lambda _{2}>0$.
Moreover,
all $x_{i}$ and $u_{i}$ are bounded for $i=1,2,\ldots ,N$.
\end{theorem}

\begin{proof}
The augmented state vector $x_{ag}:=\left[ x^{T},z_1^{T},z_2^T\right] ^{T}$ with $x:=\left[
x_{1},x_{2},\ldots ,x_{N}\right]^T $ and $z_j:=\left[ z_{j1},z_{j2},\ldots ,z_{jN}%
\right]^T ,$ ($j=1,2$) is defined where for each $i=1,2,\ldots ,N$,
\begin{align}
z_{1i}:=&\int_{0}^{t}x_{i}^2(s)e_{i}^2(s)ds, \label{sizidefinedbe}\\
z_{2i}:=&\int_{0}^{t}x_{i}(s)e_{i}(s)ds.  \label{sizidefined}
\end{align}%
The closed-loop dynamics of SI agents (\ref{agentdynamicssi}) with (\ref%
{SIcontroller}), (\ref{SIparameter1}), (\ref{sizidefinedbe}) and (\ref{sizidefined}) take the form%
\begin{equation}
\left\{
\begin{array}{l}
\dot{x}_{i}=Q_{i}\left( x_{ag}\right) \left( \lambda _{1}x^T\varepsilon _{i}\varepsilon _{i}^TL(t)x+\lambda _{2}z_{2i}\right) \varepsilon _{i}^{T}L(t)x \\
\dot{z}_{1i}=\left(x^{T}\varepsilon _{i}\varepsilon _{i}^{T}L(t)x\right)^2\\
\dot{z}_{2i}=x^{T}\varepsilon _{i}\varepsilon _{i}^{T}L(t)x%
\end{array}%
\right.  \label{sioverallsys}
\end{equation}
with%
\begin{equation}
\left\{
\begin{array}{l}
Q_{i}\left( x_{ag}\right) =b_{i}S_{i}\cos \left( S_{i}\right)  \\
S_{i}=\frac{1}{2}x_{i}^{2}+\lambda _{1}z_{1i}+\frac{1}{2}\lambda _{2}z_{2i}^{2}
\end{array}%
\right.   \label{SIqyzdefine}
\end{equation}%
where $\varepsilon _{i}$ is the $i$-th column of the identity matrix. It is
seen from (\ref{sioverallsys}) that for the dynamical system $\dot{x}%
_{ag}=f\left( x_{ag},t\right) $ the mapping $f$ is piecewise continuous and locally
Lipschitz wrt $x_{ag}$. Hence from section 8.5 in \cite{morris1973differential}, a unique
continuous solution $x_{ag}\left( \cdot \right) $ exists over some maximal interval $\left[ 0,t_{f}\right) $. 
In view of the control law (\ref%
{SIcontroller}), the time derivative of $S_{i}(t)$ is%
\begin{eqnarray*}
\dot{S}_{i}(t) &=&\left[ 1+b_{i}S_{i}(t)\cos (S_{i}(t))\right]
x_{i}(t)e_{i}(t) \\
&&\times \left[ \lambda _{1}x_{i}(t)e_{i}(t)+\lambda
_{2}\int_{0}^{t}x_{i}(s)e_{i}(s)ds\right]
\end{eqnarray*}%
for all $t\in \left[ 0,t_{f}\right)$. Thus, according to Lemma \ref%
{lemmabounded}, we have%
\[
\left\vert S_{i}(t)-S_{i}(0)\right\vert \leq 2\left[ \pi +(1/\left\vert b_{i}\right\vert) \right]
\]%
which means that $S_{i}(t)$ is bounded in $\left[ 0,t_{f}\right) $. It is
observed from (\ref{SIparameter1}) that $S_{i}(t)\geq 0$ for any $t\geq 0$.
Therefore, boundedness of $S_{i}(t)$ with (\ref{sizidefinedbe}) and (\ref{sizidefined}) yields
boundedness of $x_{i}$, $z_{1i}(t)$ and $z_{2i}(t)$ in $\left[ 0,t_{f}\right) $.
Thus, the whole state vector $x_{ag}$ is bounded in $[0,t_f)$ and the  solution can be extended up to
$t_{f}=\infty $. Since the related bounds are independent from the final time $t_f$, they remain unchanged when the solution is extended to $t_{f}=\infty $, i.e. $x_i,z_{1i},z_{2i},S_i\in\mathcal{L}_{\infty}$. In addition, we have $x_i(t)e_i(t)\in \mathcal{L}_{2}\cap
\mathcal{L}_{\infty }$ and from (\ref{agentdynamicssi}), it is obtained $%
d[x_ie_i]/dt=\dot{x}_i(t)e_i(t)+x_i(t)\dot{e}_i(t)\in \mathcal{L}_{\infty }$
except at the points of switching topology $t_{j}$ $\left( j\in \mathbb{N}_+\right) $. Therefore, using Lemma %
\ref{lemmageneralizaiton} for  $\phi(t)=x_i^2(t)e_i^2(t)$, we have  $\lim_{t\rightarrow \infty }x_i(t)e_i(t)=0$.

We will further prove that $\lim_{t\rightarrow\infty}\dot{x}_i(t)=0$. Multiplying both sides of \eqref{agentdynamicssi} by $x_i$ and using \eqref{SIcontroller} we obtain the limit $\lim_{t\rightarrow\infty}[x_i(t)\dot{x}_i(t)]=\lim_{t\rightarrow\infty}[x_i(t)e_i(t)]=0$. Also, from \eqref{sioverallsys} and the boundedness of $x_{ag}$ we  have that $\dot{x}_{ag}$ is a bounded piecewise continuous function. Further differentiation at all times except the points $t_j$  proves that $\dot{x}_i$ is a piecewise continuous function with bounded derivative. Thus, from Lemma 2 of \cite{psillakis2016consensusieeeac},  $\dot{x}_i$ is a uniformly piecewise right continuous function. A direct application now of Lemma \ref{lemmafdotconvergence} yields the desired  $\lim_{t\rightarrow\infty}\dot{x}_i(t)=0$.
Since $\lim_{t\rightarrow\infty}\dot{x}_i(t)=0$ and $\lim_{t\rightarrow\infty}x_i(t)e_i(t)=0$,   the  consensus property \eqref{objectivesi} is derived from Lemma \ref{lemmaswitch} by setting $\xi_{i}(t)=e_i(t)$ and $y_{i}(t)=x_{i}(t)$.
\end{proof}

For a strongly connected and fixed graph the following Corollary for SI agents (\ref{agentdynamicssi}) is obtained.

\begin{corollary}
\label{theoremsi}
Consider a network of SI agents (\ref{agentdynamicssi}) satisfying Assumption \ref{assumcontrolgain} with the strongly connected
graph $\mathcal{G}$. The consensus problem (\ref{objectivesi}) is
solved if the distributed control algorithms \eqref{SIcontroller} and \eqref{SIparameter1} are selected.
Furthermore, all $x_{i}$ and $%
u_{i}$ are bounded for $i=1,2,\ldots ,N$.
\end{corollary}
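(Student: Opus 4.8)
The plan is to treat the strongly connected fixed graph as the simplest instance of the switching framework and to reuse as much of the proof of Theorem \ref{theoremsiswitching} as possible. The key observation is that a strongly connected graph $\mathcal{G}$ is a single basis bicomponent on its own: it is strongly connected by hypothesis and, since it contains every vertex of $\mathcal{V}$, it has no incoming links from outside. Hence the jointly strongly connected basis property of Definition \ref{definitionjscg} holds trivially with $M=1$ and one basis bicomponent equal to the whole graph, and the setup of Assumption \ref{Assumtopologies} degenerates to a constant $L(t)\equiv L$ with no genuine switches (so the dwell-time and reactivation bounds are vacuously satisfied, or may be enforced with fictitious switches that leave $L$ unchanged).

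First I would carry over verbatim the boundedness argument of Theorem \ref{theoremsiswitching}: the same computation of $\dot{S}_i$ together with Lemma \ref{lemmabounded} gives $|S_i(t)-S_i(0)|\le 2[\pi+1/|b_i|]$, whence $S_i$, and through \eqref{SIparameter1} also $x_i$, $z_{1i}$, $z_{2i}$, are bounded on $[0,\infty)$; boundedness of $u_i$ then follows from \eqref{SIcontroller} since $e_i$ and every other factor are bounded. This step does not use the topology at all and is therefore unchanged. Likewise $x_i e_i\in\mathcal{L}_2\cap\mathcal{L}_\infty$ with $\mathcal{L}_\infty$ derivative, so $\lim_{t\to\infty}x_i(t)e_i(t)=0$, and the argument yielding $\lim_{t\to\infty}\dot{x}_i(t)=0$ transfers directly. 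In fact, since the graph is fixed, $e_i$ and $\dot{x}_i$ have no discontinuities, so one may replace Lemmas \ref{lemmageneralizaiton} and \ref{lemmafdotconvergence} by the classical Barbalat lemma if a self-contained argument is preferred.

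The only step that needs reworking is the passage from $x_i e_i\to 0$ to consensus, and here I would apply Lemma \ref{ltrans_scc_lemma} with $r=N$, the whole vertex set being the unique basis bicomponent. This yields a left eigenvector $\omega=[\omega_1,\dots,\omega_N]^T$ of $L$ for the zero eigenvalue with $\omega_m>0$ and $\sum_m\omega_m=1$, and, taking $\zeta_m=x_m$, the quadratic identity
\begin{equation*}
\sum_{m=1}^{N}\omega_m x_m e_m=\frac{1}{2}\sum_{m=1}^{N}\sum_{n=1}^{N}\omega_m a_{mn}(x_m-x_n)^2 .
\end{equation*}
Since each $x_m e_m\to 0$ the left-hand side tends to $0$, and because every summand on the right is nonnegative, each term $\omega_m a_{mn}(x_m-x_n)^2\to 0$; thus $x_m-x_n\to 0$ along every edge $(n,m)$ with $a_{mn}>0$. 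Strong connectivity then propagates this to all pairs: for arbitrary $i,k$ pick a directed path and telescope $x_i-x_k$ as a finite sum of edge differences, each tending to zero, which is precisely the consensus objective \eqref{objectivesi}.

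I expect the main obstacle to be this last implication — turning decay of the weighted local disagreement $\sum_m\omega_m x_m e_m$ into genuine node-wise consensus. The essential ingredients are the strictly positive left-eigenvector weights from Lemma \ref{ltrans_scc_lemma}, which render the right-hand quadratic form a sum of nonnegative terms, and strong connectivity, which closes the gap between edge-wise and global convergence. Notably, for a fixed topology the condition $\dot{x}_i\to 0$ is not needed in this final step, since the edge set never changes; the fixed-graph consensus argument is therefore strictly simpler than the invocation of Lemma \ref{lemmaswitch} required in Theorem \ref{theoremsiswitching}.
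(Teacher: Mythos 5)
Your proposal is correct, but it follows a genuinely different route from the paper, whose entire proof of this corollary is the single sentence that it is ``a direct consequence of Theorem \ref{theoremsiswitching}.'' The paper thus leans on the full switching machinery, implicitly asking the reader to check that a fixed strongly connected graph fits Assumption \ref{Assumtopologies} (via fictitious switching instants, exactly as you note), and in particular routes the final consensus step through Lemma \ref{lemmaswitch}, which in turn requires $\lim_{t\rightarrow\infty}\dot{x}_i(t)=0$ and hence Lemma \ref{lemmafdotconvergence}. You instead bypass Lemma \ref{lemmaswitch} entirely: after the (topology-independent) nonlinear PI boundedness argument and $x_ie_i\rightarrow 0$, you apply Lemma \ref{ltrans_scc_lemma} with $r=N$ to write $\sum_m\omega_m x_m e_m$ as the nonnegative weighted quadratic form $\frac{1}{2}\sum_{m,n}\omega_m a_{mn}(x_m-x_n)^2$, conclude edge-wise agreement from positivity of the left-eigenvector weights, and propagate it to all pairs by telescoping along directed paths. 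This is sound, and it buys two things the paper's one-liner does not make visible: a self-contained argument in which the classical Barbalat lemma can replace Lemmas \ref{lemmageneralizaiton} and \ref{lemmafdotconvergence} (since $e_i$ has no discontinuities for a fixed graph), and the observation that $\dot{x}_i\rightarrow 0$ is genuinely not needed when the topology is fixed --- that condition exists only to control the drift of $x_i$ over the interim intervals in which a given basis bicomponent is deactivated, and such intervals do not occur here. The trade-off is length: the paper's citation of Theorem \ref{theoremsiswitching} is shorter and emphasizes that the fixed case is subsumed by the switching one, whereas your argument is more elementary and exposes which hypotheses are actually doing work.
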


\begin{proof}
The result is a direct consequence of Theorem \ref{theoremsiswitching}, and therefore its proof is omitted.
\end{proof}

\subsection{Consensus algorithm design for DI agents}

For DI agents (\ref{agentdynamicsdi}), we define $\zeta_{i}:=\sum_{k=1}^{N}a_{ik}\left( v_{i}-v_{k}\right) $, $\eta
_{i}:=\sum_{k=1}^{N}a_{ik}\left( x_{i}-x_{k}\right) $, $q_{i}:=v_{i}+\rho
x_{i} $ and $r_{i}:=\zeta _{i}+\rho \eta _{i}$ with $\rho >0$.
The main result for DI agents is the following:
\begin{theorem}
\label{theoremdiswitching} 
Consider a network of DI agents (\ref{agentdynamicsdi}) satisfying Assumption \ref{assumcontrolgain} with switching
topologies described by Assumption \ref{Assumtopologies}. The consensus
problem (\ref{objectivedi}) is solved if the distributed control algorithms
are selected by
\begin{align}
u_{i}(t) &=R_{i}(t)\cos (R_{i}(t))\bigg [ (\rho+1) v_{i}(t)\bigg.   \nonumber \\
&\left. +r_{i}(t)\left( \lambda _{1}q_{i}(t)r_{i}(t)+\lambda
_{2}\int_{0}^{t}q_{i}(s)r_{i}(s)ds\right) \right]   \label{DIcontroller2}
\end{align}%
with
\begin{align}
R_{i}(t) &=\frac{1}{2}q_{i}^{2}(t)+\frac{\rho}{2}x_i^2(t)+\lambda_{1}\int_{0}^{t}q_{i}^2(s)r_{i}^2(s)ds  \nonumber  \\
&+\int_0^t{v_i^2(s)ds} +\frac{\lambda _{2}}{2}\left[ \int_{0}^{t}q_{i}(s)r_{i}(s)ds\right] ^{2}
\label{DIcontrollerspec1}
\end{align}%
where  $\lambda _{1}>0$ and $\lambda _{2}>0$.
Moreover, all $x_{i},$ $v_{i}$ and $u_{i}$ are bounded for $i=1,2,\ldots ,N$.
\end{theorem}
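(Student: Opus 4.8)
The plan is to transcribe the SI argument of Theorem \ref{theoremsiswitching} to the composite variable $q_i:=v_i+\rho x_i$, whose disagreement is carried by $r_i=\zeta_i+\rho\eta_i$. The first observation is that $r_i=\sum_k a_{ik}(q_i-q_k)=(L(t)q)_i$, so that $r$ plays exactly the role that $e$ played for the SI state $x$, and the target \eqref{objectivedi} will be reached by combining consensus in $q$ with $v_i\to 0$. I would begin by forming the augmented state consisting of $x$, $v$ together with the three integral states in \eqref{DIcontrollerspec1}, write the closed loop as $\dot{x}_{ag}=f(x_{ag},t)$ with $f$ piecewise continuous and locally Lipschitz in $x_{ag}$, and invoke the same existence result as in Theorem \ref{theoremsiswitching} to get a unique solution on a maximal interval $[0,t_f)$.

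The crux is the differentiation of $R_i$. Writing $P_i:=(\rho+1)v_i+r_i(\lambda_1 q_i r_i+\lambda_2\int_0^t q_i(s)r_i(s)\,ds)$ for the bracketed factor of \eqref{DIcontroller2}, I would substitute $\dot{q}_i=b_iu_i+\rho v_i$ and $\dot{x}_i=v_i$ into $\dot{R}_i$ and show that the terms not involving $u_i$ collapse through the identity $(\rho+1)v_i^2+\rho(\rho+1)x_iv_i=(\rho+1)q_iv_i$, after which they factor as $q_iP_i$; the control contribution is $b_iq_iu_i=b_iR_i\cos(R_i)\,q_iP_i$, so that $\dot{R}_i=[1+b_iR_i\cos(R_i)]\,q_iP_i$. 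This is precisely the form of Lemma \ref{lemmabounded} with $\alpha_1=1$, $\alpha_2=b_i\neq 0$, giving $|R_i(t)-R_i(0)|\le 2(\pi+1/|b_i|)$; since every summand of \eqref{DIcontrollerspec1} is nonnegative, $R_i\ge 0$ and hence $R_i$ is bounded. Boundedness of $R_i$ then yields boundedness of $q_i$ and $x_i$ (hence of $v_i=q_i-\rho x_i$), of $\int_0^t q_i^2 r_i^2$ and $\int_0^t v_i^2$ (so $q_ir_i,v_i\in\mathcal{L}_2$), and of $\int_0^t q_i r_i$; as these bounds are independent of $t_f$, the solution extends to $t_f=\infty$, and $r_i$ is bounded because $L(t)$ has bounded entries.

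The asymptotic analysis proceeds in three waves. First, apply Lemma \ref{lemmageneralizaiton} with $\phi=v_i^2$, whose integral converges and whose derivative $2b_iv_iu_i$ is bounded off the switching instants ($u_i$ being bounded), to conclude $v_i\to 0$; this already delivers $v_i-v_k\to 0$. Second, apply the same lemma with $\phi=q_i^2 r_i^2$ to obtain $q_ir_i\to 0$. Third, establish $\lim_{t\to\infty}q_i\dot{q}_i=0$: indeed $q_i\dot{q}_i=b_iq_iu_i+\rho q_iv_i$ with $q_iu_i=R_i\cos(R_i)\,q_iP_i$, and $q_iP_i\to 0$ since each of its summands $(\rho+1)q_iv_i$, $\lambda_1(q_ir_i)^2$ and $\lambda_2(q_ir_i)\int_0^t q_ir_i$ vanishes from $v_i\to 0$, $q_ir_i\to 0$ and the boundedness of $\int_0^t q_ir_i$. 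After checking (via Lemma 2 of \cite{psillakis2016consensusieeeac}) that $\dot{q}_i$ is uniformly piecewise right continuous with bounded derivative off the $t_j$, Lemma \ref{lemmafdotconvergence} with $f=q_i$ gives $\dot{q}_i\to 0$. With $\lim q_ir_i=0$ and $\lim\dot{q}_i=0$, Lemma \ref{lemmaswitch} applied with $y_i=q_i$, $\xi_i=r_i$ yields $q_i-q_k\to 0$. Finally, since $q_i-q_k=(v_i-v_k)+\rho(x_i-x_k)$ and $v_i-v_k\to 0$, dividing by $\rho>0$ gives $x_i-x_k\to 0$, completing \eqref{objectivedi}, while boundedness of $x_i,v_i,u_i$ was already shown.

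I expect the main obstacle to be the algebraic cancellation $\dot{R}_i=[1+b_iR_i\cos(R_i)]\,q_iP_i$. The extra terms $\tfrac{\rho}{2}x_i^2$ and $\int_0^t v_i^2$ inside $R_i$, which have no SI analogue, are tuned exactly to absorb the $x_iv_i$ and $v_i^2$ cross-terms generated by the coupling $q_i=v_i+\rho x_i$, and getting this bookkeeping right is the heart of the proof. A secondary delicate point is verifying $q_iP_i\to 0$ so that $\dot{q}_i\to 0$ despite the persistent integral $\int_0^t q_ir_i$, which converges but not necessarily to zero; this succeeds only because that integral always appears multiplied by the vanishing factor $q_ir_i$.
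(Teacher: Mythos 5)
Your proposal is correct and follows essentially the same route as the paper: Lemma \ref{lemmabounded} applied to $\dot{R}_i=[1+b_iR_i\cos(R_i)]q_iP_i$ for boundedness, Lemma \ref{lemmageneralizaiton} for $q_ir_i\to 0$ and $v_i\to 0$, Lemma \ref{lemmafdotconvergence} for $\dot{q}_i\to 0$, and Lemma \ref{lemmaswitch} with $y_i=q_i$, $\xi_i=r_i$ to obtain consensus in $q$ and hence in $x$ and $v$. The algebraic cancellation $\rho q_iv_i+\rho x_iv_i+v_i^2=(\rho+1)q_iv_i$ that you flag as the crux is exactly the bookkeeping the paper's construction of $R_i$ is designed for, and your verification of it is sound.
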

\begin{proof}
Consider the augmented state vector $\bar{x}_{ag}:=\left[ x^{T},v^{T},\bar{z}_{1}^{T},\bar{z}_{2}^{T}%
\right] ^{T}$ with $x:=\left[ x_{1},x_2,\ldots ,x_{N}\right] ^{T},$ $v:=%
\left[ v_{1},v_2,\ldots ,v_{N}\right] ^{T}$ and $\bar{z}_{j}:=\left[
\bar{z}_{j1},\bar{z}_{j2},\ldots ,\bar{z}_{jN}\right] ^{T},$ ($j=1,2$) where for
each $i=1,2,\ldots ,N$,
\begin{align}
\bar{z}_{1i}:=& \lambda_1\int_{0}^{t}q_{i}^{2}(s)r_{i}^{2}(s)ds+\int_0^t{v_i^2(s)ds},  \label{dizidefinedbe} \\
\bar{z}_{2i}:=& \int_{0}^{t}q_{i}(s)r_{i}(s)ds.  \label{dizidefined}
\end{align}%
The closed-loop dynamics of the DI agents (\ref{agentdynamicsdi}) with (\ref%
{DIcontroller2}) and (\ref{DIcontrollerspec1}) are%
\begin{equation}
\left\{
\begin{array}{l}
\dot{x}_{i}=v_{i} \\
\dot{v}_{i}=W_{i}\left( \bar{x}_{ag}\right) \Big[\left( \rho
+1\right) v_{i}+(v+\rho x)^{T}L^{T}(t)\varepsilon _{i}\Big. \\
\text{ \ \ \ \ \ }\times \left. \left( \lambda _{1}\left( v+\rho x\right)
^{T}\varepsilon _{i}\varepsilon _{i}^{T}L(t)(v+\rho x)+\lambda
_{2}\bar{z}_{2i}\right) \right]  \\
\dot{\bar{z}}_{1i}=\lambda_1\left[ \left( v+\rho x\right) ^{T}\varepsilon
_{i}\varepsilon _{i}^{T}L(t)(v+\rho x)\right] ^{2}+v_i^2 \\
\dot{\bar{z}}_{2i}=\left( v+\rho x\right) ^{T}\varepsilon
_{i}\varepsilon _{i}^{T}L(t)(v+\rho x)%
\end{array}%
\right.   \label{DIsystemsall}
\end{equation}%
with%
\begin{equation}
\left\{
\begin{array}{l}
W_{i}\left( \bar{x}_{ag}\right) =b_{i}R_{i}\cos \left( R_{i}\right)  \\
R_{i}=\frac{1}{2}\left( v_i+\rho x_i\right)^2 +\frac{\rho }{2}x_{i}^{2}+\bar{z}_{1i}+\lambda _{2}\bar{z}_{2i}^{2}%
\end{array}%
\right.   \label{diclosedsyspara}
\end{equation}%
where $\varepsilon _{i}$ is the $i$-th column of the identity matrix. It is
seen from (\ref{DIsystemsall}) that the dynamical system $\dot{\bar{x}}_{ag}=\bar{f}\left( \bar{x}_{ag},t\right) $ has a piecewise continuous and locally
Lipschitz mapping $\bar{f}$ wrt $\bar{x}_{ag}$. Thus, from section 8.5 in \cite{morris1973differential}, a
unique continuous solution $\bar{x}_{ag}\left( \cdot \right) $ exists 
over some maximal interval $[0,\bar{t}_f)$.
In view of the control law (%
\ref{DIcontroller2}), the time derivative of $R_{i}(t)$ is%
\begin{eqnarray*}
\dot{R}_{i}(t) &=&\left[ 1+b_{i}R_{i}(t)\cos (R_{i}(t))\right]
q_{i}(s)\bigg[ (\rho+1) v_{i}(t)\bigg.  \\
&& \left. +r_{i}(t)\left( \lambda _{1}q_{i}(t)r_{i}(t)+\lambda
_{2}\int_{0}^{t}q_{i}(s)r_{i}(s)ds\right) \right]
\end{eqnarray*}%
for all $t\in \left[ 0,\bar{t}_{f}\right) $. Hence, according to Lemma \ref%
{lemmabounded}, we have%
\begin{equation*}
\left\vert R_{i}(t)-R_{i}(0)\right\vert \leq 2\left[ \pi +(1/\left\vert b_{i}\right\vert) \right]
\end{equation*}%
which means $R_{i}(t)$ is bounded in $\left[ 0,\bar{t}_{f}\right) $. Since $%
R_{i}(t)\geq 0$ for any $t\geq 0$, boundedness of $R_{i}(t)$ with (\ref%
{DIcontrollerspec1}) yields boundedness of $q_{i},v_i,x_i,r_i$ $\bar{z}_{1i}$ and $\bar{z}_{2i}$ in $\left[
0,\bar{t}_{f}\right) $. Thus, the whole state vector $\bar{x}_{ag}$ is bounded and
therefore the solution holds up to $\bar{t}_{f}=\infty $.
Since the related bounds are independent from  $\bar{t}_f$, they remain unchanged for $\bar{t}_{f}=\infty $, i.e.  $x_i$, $v_i$, $q_i$, $r_i$, $R_{i}\in
\mathcal{L}_{\infty }$ and $q_ir_i\in\mathcal{L}_2$. Moreover,  (\ref%
{DIcontroller2}) yields that $u_{i}\in \mathcal{L}_{\infty }$.
Combining these properties we obtain $d[q_ir_i]/dt=\dot{q}_ir_i+q_i\dot{r}_i \in \mathcal{L}_{\infty }$. From Lemma \ref{lemmageneralizaiton}, 
we now have $\lim_{t\rightarrow \infty }q_i(t)r_i(t)=0$.
 Note  that since $v_i\in\mathcal{L}_{\infty}\cap\mathcal{L}_2$ and $\dot{v}_i=b_iu_i\in\mathcal{L}_{\infty}$  we have from Lemma \ref{lemmageneralizaiton} that $\lim_{t\rightarrow\infty}v_i(t)=0$.
We will also prove that $\lim_{t\rightarrow\infty}\dot{q}_i(t)=0$. Using \eqref{DIcontroller2} we obtain $\lim_{t\rightarrow\infty}[q_i(t)\dot{q}_i(t)]=\lim_{t\rightarrow\infty}[q_i(t)\dot{v}_i(t)]+\rho\lim_{t\rightarrow\infty}[q_i(t)v_i(t)]=0$. From \eqref{DIsystemsall} and the boundedness of $\bar{x}_{ag}$ we  have that $\dot{\bar{x}}_{ag}\in \mathcal{L}_{\infty }$ and therefore $\dot{q}_i$ is a bounded piecewise continuous vector function. Further differentiation at all times except the points $t_j$  proves that $\dot{q}_i$ is a piecewise continuous function with bounded derivative. Thus, from Lemma 2 of \cite{psillakis2016consensusieeeac},  $\dot{q}_i$ is a uniformly piecewise right continuous function. A direct application now of Lemma \ref{lemmafdotconvergence} yields the desired  $\lim_{t\rightarrow\infty}\dot{q}_i(t)=0$.

Since $\lim_{t\rightarrow\infty}q_i(t)r_i(t)=0$ and $\lim_{t\rightarrow\infty}\dot{q}_i(t)=0$, we  obtain $\lim_{t\rightarrow\infty}(q_{i}(t)-q_{k}(t))=0$ from Lemma \ref{lemmaswitch} with $\xi_{i}(t)=r_i(t)$ and $y_{i}(t)=q_{i}(t)$.
Also since $\lim_{t\rightarrow \infty }v_{i}(t)=0$ , we have%
\begin{eqnarray*}
\lim_{t\rightarrow \infty }\left( v_{i}(t)-v_{k}(t)\right)
=\lim_{t\rightarrow \infty }v_{i}(t)-\lim_{t\rightarrow \infty }v_{k}(t)=0
\end{eqnarray*}%
for all $i,k\in \{1,2,\cdots ,N\}$. In view of the definition of $q_{i}$,  
\begin{eqnarray*}
\lim_{t\rightarrow \infty }\left( x_{i}(t)-x_{k}(t)\right)  &=&(1/\rho)
\lim_{t\rightarrow \infty }(q_{i}(t)-q_{k}(t)) \\
&&-(1/\rho)\lim_{t\rightarrow \infty }\left( v_{i}(t)-v_{k}(t)\right)=0
\end{eqnarray*}%
for all $i,k\in \{1,2,\cdots ,N\}$
which completes the proof.
\end{proof}

For a strongly connected and fixed graph the following Corollary for DI agents (\ref{agentdynamicsdi}) is obtained.
\begin{corollary}
\label{Corollarydi} 
Consider a network of DI agents (\ref{agentdynamicsdi}) satisfying Assumption \ref{assumcontrolgain} with the strongly connected
graph $\mathcal{G}$. The consensus problem (\ref{objectivedi}) is
solved if the distributed control algorithms are selected as \eqref{DIcontroller2}
and \eqref{DIcontrollerspec1}.
Furthermore, all $x_{i}$, $v_{i}$ and $u_{i}$ are bounded for $i=1,2,\ldots ,N$.
\end{corollary}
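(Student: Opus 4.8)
The plan is to recognize that a fixed strongly connected graph is merely a degenerate special case of the switching framework captured by Assumption \ref{Assumtopologies}, so that Theorem \ref{theoremdiswitching} applies verbatim. This mirrors the treatment of the single-integrator case in Corollary \ref{theoremsi}, whose proof was omitted as a direct consequence of Theorem \ref{theoremsiswitching}.

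First I would establish that a strongly connected graph $\mathcal{G}$ over the vertex set $\mathcal{V}$ is its own unique basis bicomponent. By Definition \ref{definitionscc}, a basis bicomponent is a strongly connected subgraph with no incoming links from outside nodes; since $\mathcal{G}$ is strongly connected and spans all of $\mathcal{V}$, there are no outside nodes, so $\mathcal{G}$ itself qualifies. Consequently, taking $M=1$ with the single constant Laplacian $L(t)=L$ for all $t\geq 0$, the union of basis bicomponents in Definition \ref{definitionjscg} is $\mathcal{G}$ itself, which is strongly connected and covers $\mathcal{V}$; hence the trivial group of one graph has a jointly strongly connected basis.

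Next I would verify the dwell-time and reactivation conditions of Assumption \ref{Assumtopologies}. With no genuine switching, one may take any $\tau_{\min}>0$, while the single topology remains active on $[0,\infty)$, so the reactivation bound $T_{\ell,\nu+1}^a-T_{\ell\nu}^d\leq\tau_{\max}$ holds vacuously (there is effectively one activation interval and no deactivation). Thus every requirement of Assumption \ref{Assumtopologies} is satisfied by the fixed strongly connected graph, and in particular Lemma \ref{lemmaswitch} remains applicable with $\xi_i(t)=r_i(t)$ and $y_i(t)=q_i(t)$.

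Having embedded the fixed-graph setting into Assumption \ref{Assumtopologies}, I would invoke Theorem \ref{theoremdiswitching} directly to conclude that the control laws \eqref{DIcontroller2}--\eqref{DIcontrollerspec1} solve the consensus problem \eqref{objectivedi} and render all $x_i$, $v_i$, $u_i$ bounded. The only point requiring care is the formal degeneration of the periodic-reactivation structure to the no-switching case, but since there are no switching instants this is routine bookkeeping rather than a genuine obstacle; the substantive analysis has already been carried out in the proof of Theorem \ref{theoremdiswitching}.
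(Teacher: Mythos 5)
Your proposal is correct and takes essentially the same route as the paper, which simply states that the corollary is a direct consequence of Theorem \ref{theoremdiswitching} and omits the details. Your elaboration --- that a strongly connected graph is its own basis bicomponent, hence the single-element graph family trivially has a jointly strongly connected basis and Assumption \ref{Assumtopologies} degenerates harmlessly --- is exactly the justification the paper leaves implicit.
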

\begin{proof}
The result is a direct consequence of Theorem \ref{theoremdiswitching}, and we omit its proof.
\end{proof}

\section{Simulation examples} \label{section4}
In this section, a group of four agents with SI dynamics (Case 1) or DI dynamics (Case 2) is considered under unbalanced and switching topologies having a jointly
strongly connected basis shown in Fig. \ref{Fig:simulation_topologies}.
\begin{figure}[th]
\centering
\includegraphics[width=0.7\columnwidth]{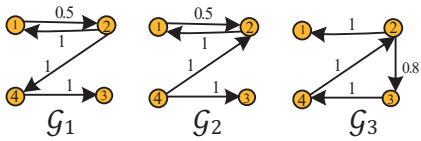}
\caption{The switching topologies $\mathcal{G}_1, \mathcal{G}_2, \mathcal{G}_3$.}
\label{Fig:simulation_topologies}
\end{figure}
An infinite sequence of switchings occurs in a periodic manner with transitions $\mathcal{G}_{1}\rightarrow \mathcal{G}_{2}\rightarrow
\mathcal{G}_{3}\rightarrow \mathcal{G}_{1}\rightarrow \cdots $ with activated topology at time $t$%
\begin{equation*}
\mathcal{G}(t)\mathcal{=}\left\{
\begin{array}{l}
\mathcal{G}_{1},\text{ if } t\mod 2\in \left[ 0,0.5\right)  \\
\mathcal{G}_{2},\text{ if } t\mod 2\in \left[ 0.5,1\right)  \\
\mathcal{G}_{3},\text{ if } t\mod 2\in \left[ 1,2\right)
\end{array}%
\right. .
\end{equation*}
For both cases let initial states $x(0)=\left[ -1,1.2,-3,1.5\right] ^{T}$ and non-identical unknown control gains $b_{1}=1,$ $b_{2}=-4$, $b_{3}=-3,$ $%
b_{4}=6$. For Case 2 let the initial condition $v(0)=\left[ -0.2,-1,0.2,1\right] ^{T}$. The proposed control laws (\ref{SIcontroller}), (\ref{SIparameter1}) and (\ref{DIcontroller2}), (\ref{DIcontrollerspec1}) are employed
with parameters $\lambda_{1} =0.4$, $\lambda_{2} =0.2$, $\rho =0.55$. Simulation results
are shown in Fig. \ref{Fig:SIstate_switch}-\ref{Fig:SIcontrol_swtich} for Case 1, and in Fig. \ref{Fig:DI_switching_state}-\ref{Fig:DI_switching_controlinput} for Case 2, respectively. It is clear that for both cases asymptotic consensus is achieved and all the signals $x_{i}$, $v_{i}$ and $u_{i}$ are bounded.

\begin{figure}[th]
\centering
\includegraphics[width=0.48\textwidth, height = 0.3 \linewidth]{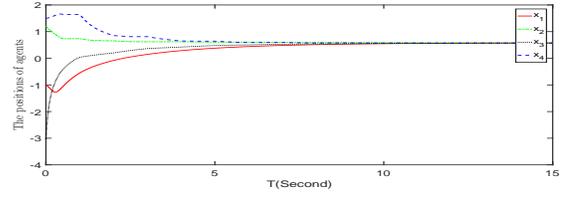}
\caption{The positions $x_{i}$ for SI agents ($i=1,\ldots,4$)}
\label{Fig:SIstate_switch}
\end{figure}
\begin{figure}[th]
\centering
\includegraphics[width=0.48\textwidth, height = 0.3 \linewidth]{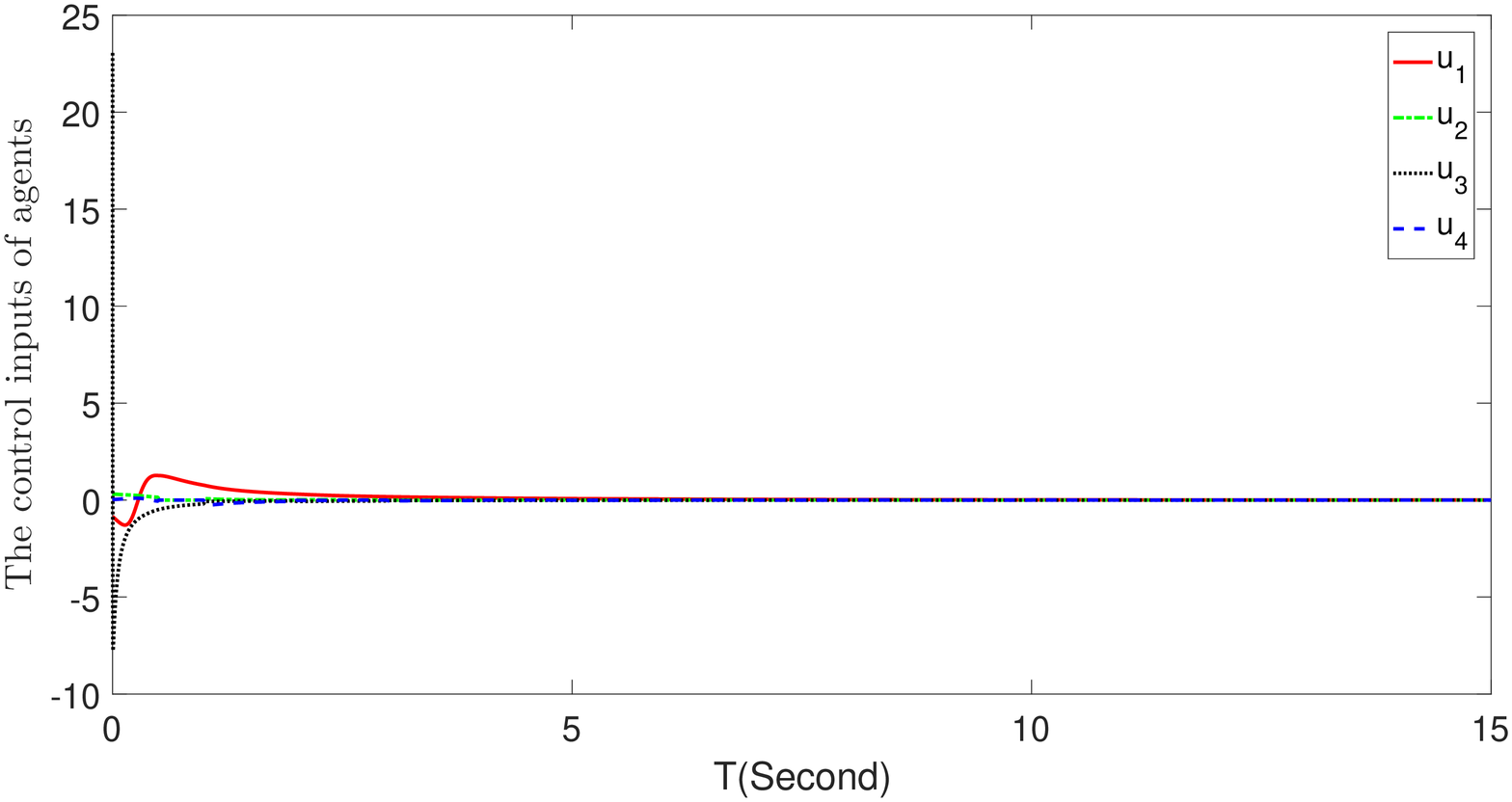}
\caption{The control inputs $u_{i}$ for SI agents ($i=1,\ldots,4$)}
\label{Fig:SIcontrol_swtich}
\end{figure}
\begin{figure}[th]
\centering
\includegraphics[width=0.48\textwidth, height = 0.3
\linewidth]{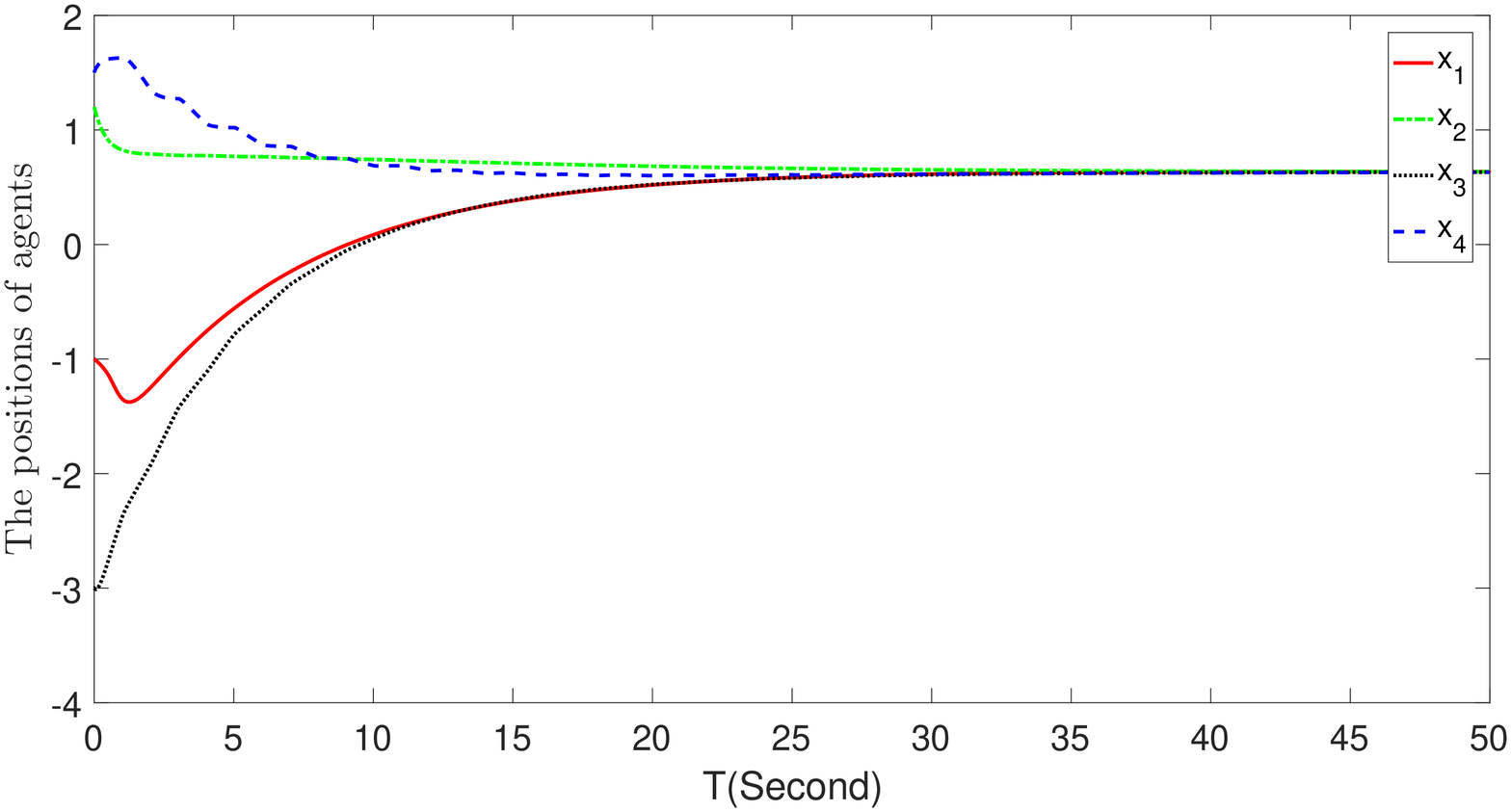}
\caption{The positions $x_{i}$ for DI agents ($i=1,\ldots,4$)}
\label{Fig:DI_switching_state}
\end{figure}
\begin{figure}[th]
\centering
\includegraphics[width=0.48\textwidth, height = 0.3
\linewidth]{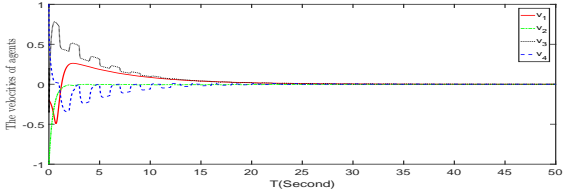}
\caption{The velocities $v_{i}$ for DI agents ($i=1,\ldots,4$)}
\label{Fig:DI_switching_velocity}
\end{figure}
\begin{figure}[th]
\centering
\includegraphics[width=0.48\textwidth, height = 0.3
\linewidth]{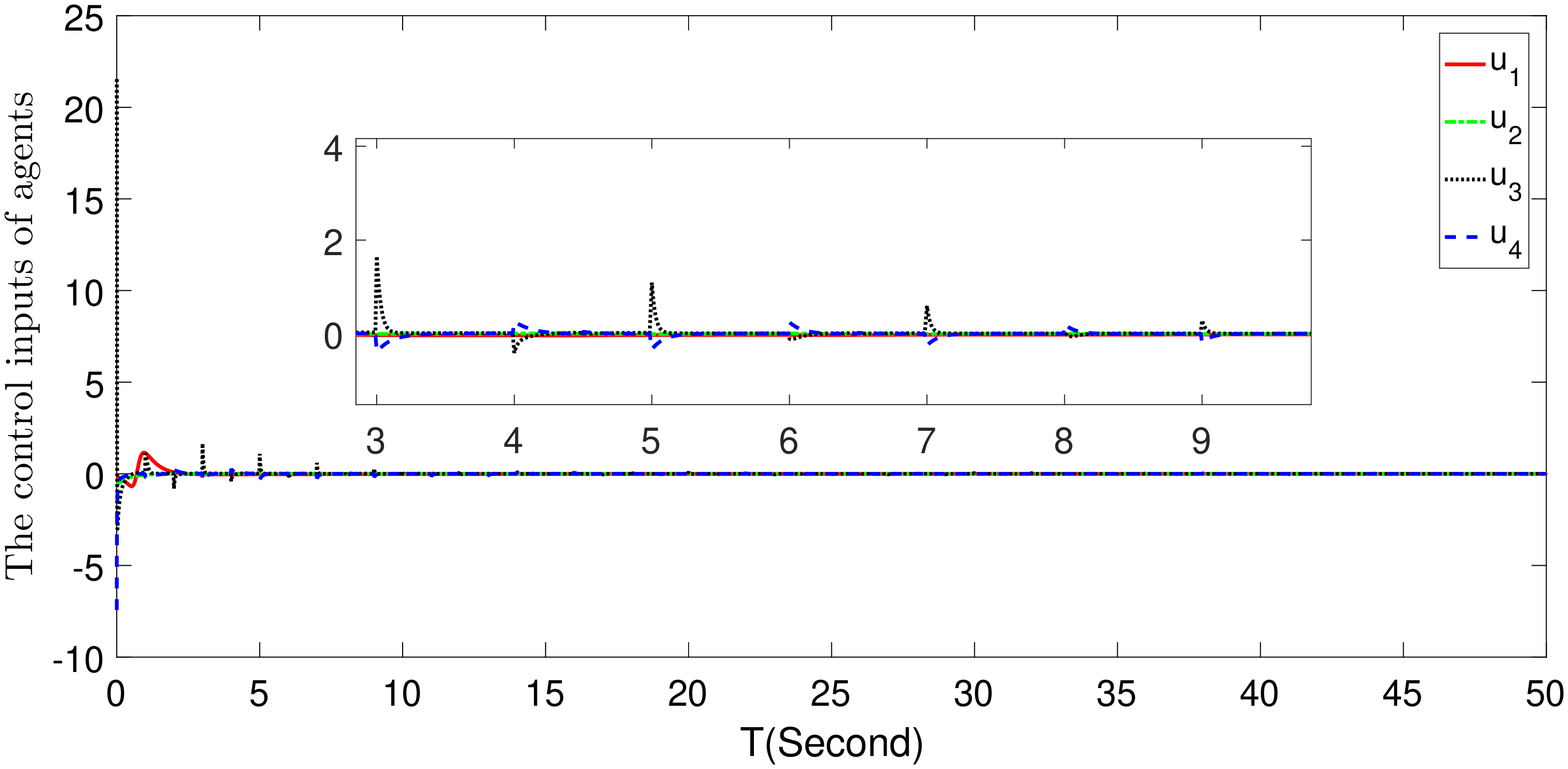}
\caption{The control inputs $u_{i}$ for DI agents ($i=1,\ldots,4$)}
\label{Fig:DI_switching_controlinput}
\end{figure}

\section{Conclusion} \label{section5}
We present a method to solve the consensus problem for agents
with non-identical unknown control directions under unbalanced and switching topologies. A new class of nonlinear PI based algorithms are constructed by a suitable selection of the distributed nonlinear PI
functions. It has been rigorously proven that the consensus of SI and DI agents with non-identical unknown control directions can be achieved under switching topologies having a jointly
strongly connected basis or a strongly connected and fixed graph.

\appendices
\section{Proof of Lemma \ref{lemmafdotconvergence}}\label{appendix_proof_of_Lemma3}
\begin{proof}
Assume the opposite. Then, for some sufficiently small $\epsilon>0$ there exists a sequence of times $\{T_{\sigma}\}_{\sigma =1}^{\infty}$ with $\lim_{\sigma\rightarrow\infty}T_{\sigma}=+\infty$  such that $|\dot{f}(T_{\sigma})|>\epsilon$ for all $\sigma\in\mathbb{N}$. Obviously $T_{\sigma}\in[t_{j^*({\sigma})},t_{j^*({\sigma})+1})$ for some $j^*({\sigma})\in I$ and since $t_{j^*({\sigma})+1}-t_{j^*({\sigma})}>\tau$ we have that either $[T_{\sigma},T_{\sigma}+\tau/2]\subseteq[t_{j^*({\sigma})},t_{j^*({\sigma})+1})$ or $[T_{\sigma}-\tau/2,T_{\sigma}]\subseteq[t_{j^*({\sigma})},t_{j^*({\sigma})+1})$. Without loss of generality we assume that $[T_{\sigma},T_{\sigma}+\tau/2]\subseteq[t_{j^*({\sigma})},t_{j^*({\sigma})+1})$.
Since $\dot{f}(t)$ is uniformly piecewise right continuous
there exists some $\delta(\epsilon)>0$  such that
\begin{equation*}
  |\dot{f}(t)-\dot{f}(T_{\sigma})|\leq \epsilon/2
\end{equation*}
for all $t\in[T_{\sigma},T_{\sigma}+\delta(\epsilon)]\subset[t_{j^*({\sigma})},t_{j^*({\sigma})+1})$. Hence,
\begin{equation}
  |\dot{f}(t)|\geq  |\dot{f}(T_{\sigma})|-|\dot{f}(t)-\dot{f}(T_{\sigma})|> \epsilon-\epsilon/2=\epsilon/2, \label{absflowerbound}
\end{equation}
for all $t\in [T_{\sigma},T_{\sigma}+\delta(\epsilon)]$. Due to $\lim_{t\rightarrow\infty}[f(t)\dot{f}(t)]=0$ there exist $\bar{\sigma}\in\mathbb{N}_+$ such that
\begin{equation}\label{fdotfupperbound}
  |f(t)\dot{f}(t)|\leq \frac{\delta(\epsilon)\epsilon^2}{8}\quad\forall t\in[T_{\sigma},T_{\sigma}+\delta(\epsilon)],\:\forall\sigma\geq\bar{\sigma}.
\end{equation}
From \eqref{absflowerbound} and \eqref{fdotfupperbound} we obtain
\begin{equation}\label{fupperbound}
  |f(t)|\leq \frac{\delta(\epsilon)\epsilon^2}{8|\dot{f}(t)|}<\frac{\delta(\epsilon)\epsilon}{4}\quad\forall t\in[T_{\sigma},T_{\sigma}+\delta(\epsilon)],\:\forall\sigma\geq\bar{\sigma}.
\end{equation}
Using now the mean value theorem we have
\begin{equation}\label{MVTf}
f(T_{\sigma}+\delta(\epsilon))=f(T_{\sigma})+\delta(\epsilon)\dot{f}(T_{\sigma}+\theta\delta(\epsilon))\big|_{\theta\in(0,1)}.
\end{equation}
Combining \eqref{fdotfupperbound}, \eqref{fupperbound} and \eqref{MVTf} we result in
\begin{align*}
  |f(T_{\sigma}+\delta(\epsilon))|  & > \delta(\epsilon)\left|\dot{f}(T_{\sigma}+\theta\delta(\epsilon))\big|_{\theta\in(0,1)}\right| -|f(T_{\sigma})| \\
   & > \frac{\delta(\epsilon)\epsilon}{2}-\frac{\delta(\epsilon)\epsilon}{4}=\frac{\delta(\epsilon)\epsilon}{4}
\end{align*}
which is a contradiction to \eqref{fupperbound}. This completes the proof.
\end{proof}
\section{Proof of Lemma \ref{ltrans_scc_lemma}} \label{appendix_proof_of_ltrans}
\begin{proof}
Since the nodes of the basis bicomponent $\mathcal{G}_{b}$ do not receive information from the remaining graph nodes, we have that $a_{i_m,k}=0$ for all $k\neq i_1,i_2,\cdots,i_r$, $m=1,\cdots,r$. Thus, the diagonal elements of the Laplacian $L$ at the $i_m$ rows are $\sum_{n=1}^ra_{i_m,i_n}$. Due to this property, deleting all other rows and columns we obtain the matrix $L_r$ which is  the Laplacian matrix that corresponds to the strongly connected subgraph with vertices $i_1,\cdots,i_r$. The rest of the proof follows from Lemma 7.7 in \cite{Lewis2014bookcooperativecontrol}.
\end{proof}
\section{Proof of Lemma \ref{lemmaswitch}}  \label{appendix_proof_of_Lemmaswitch}
\begin{proof}
For every time interval wherein the topology $L_{\ell}$ is activated with a basis bicomponent that involves the agents $i_1,\cdots,i_r$ it holds true that $\xi_{i_m}=\sum_{n=1}^r{a_{i_m i_{n}}(y_{i_m}-y_{i_{n}})}$ and
\begin{align}
\sum_{m=1}^{r}\omega _{m}y_{i_m}(t)\xi_{i_m}(t)
&=\frac{1}{2}\sum_{m=1}^{r}\sum_{n=1}^r\omega _{m}a_{i_m i_{n}}(y_{i_m}-y_{i_{n}})^2
\label{Siboundesumsiswitching}
\end{align}
from Lemma \ref{ltrans_scc_lemma}.
Since $\lim_{t\rightarrow\infty}y_i(t)\xi_i(t)=0$ and $\omega_m>0$, for a direct edge from $i_{m}$ to $i_n$ in the basis bicomponent ($a_{i_m i_{n}}>0$),  there exists some integer $N_{i_m i_{n}}\in\mathbb{N}_+$ such that
\begin{equation}\label{jk_x_agreement_activated_topolgy}
  |y_{i_m}(t)-y_{i_{n}}(t)|\leq \epsilon/2(r-1), \quad \forall t\in[T_{\ell\nu}^a,T_{\ell\nu}^d]
\end{equation}
for all integers $\nu\geq N_{i_m i_{n}}$. Due to strong connectivity of the subgraph defined by all those nodes $i_1,\cdots,i_r$ (basis bicomponent) there is a direct path  from each $i_m$ to each $i_{n}$ that involves up to $r-1$ links.
Thus, for sufficiently  large $\nu\geq \max_{1\leq m,n\leq r}N_{i_m i_{n}}$ we then have from \eqref{jk_x_agreement_activated_topolgy} and the triangle inequality that
\begin{equation}\label{jk_x_agreement_activated_topology_all_nodes}
  |y_{i_m}(t)-y_{i_{n}}(t)|\leq \epsilon/2 \quad \forall t\in[T_{\ell\nu}^a,T_{\ell\nu}^d]
\end{equation}
for all $m,n\in\{1,2\cdots,r\}$.
During the interim time intervals there exists  a finite number of switching times which is less than or equal to $\varrho:=\lfloor \tau_{\max}/\tau_{\min}\rfloor -1$.
Denote by $\tau_{1}^{\ell\nu}$, $\tau_{2}^{\ell\nu}$, $\cdots$, $\tau_{\gamma_{\ell\nu}}^{\ell\nu}\in\{t_j\}_{j=1}^\infty$ these switching times i.e., $T_{\ell\nu}^{d}<\tau_{1}^{\ell\nu}<\cdots< \tau_{\gamma_{\ell\nu}}^{\ell\nu}<T_{\ell,\nu+1}^{a}$ with $\gamma_{\ell\nu}\in\mathbb{N}_+$  such that $\gamma_{\ell\nu}\leq\varrho$.
From the triangle inequality, for  $t\in [T_{\ell\nu}^{d},T_{\ell,\nu+1}^{a}]$ we  have
\begin{align}\label{variation_decomposition}
   |y_{i_\varsigma}(t)-y_{i_\varsigma}\left(T_{\ell \nu}^d\right)|\leq & |y_{i_\varsigma}(t)-y_{i_\varsigma}(\tau_{\varpi}^{\ell\nu})| \nonumber\\
   &+\sum_{k=1}^{\varpi-1}|y_{i_\varsigma}(\tau_{k+1}^{\ell\nu})-y_{i_\varsigma}(\tau_{k}^{\ell\nu})|  \nonumber\\ &+|y_{i_\varsigma}(\tau_{1}^{\ell\nu})-y_{i_\varsigma}\left(T_{\ell \nu}^d\right)|
\end{align}
with $\tau_{\varpi}^{\ell\nu}=\max\{\tau_{\beta}^{\ell\nu}: \tau_{\beta}^{\ell\nu}\leq t\:, \: \beta=1,2,\cdots, \gamma_{\ell\nu}\}$ for all $\varsigma=1,2,\cdots,r$. Note that in each of the open intervals $(T_{\ell\nu}^{d}, \tau_{1}^{\ell\nu})$, $(\tau_{i}^{\ell\nu},\tau_{i+1}^{\ell\nu})$, $(\tau_{\gamma_{\ell\nu}}^{\ell\nu},T_{\ell,\nu+1}^{a})$  $x_i$ is continuously differentiable.
Thus, from the mean value theorem we have that
\begin{align}
|y_{i_\varsigma}(t)-y_{i_\varsigma}(\tau_{\varpi}^{\ell\nu})|\leq &  \sup_{\tau_{\varpi}^{\ell\nu} < s< t}|\dot{y}_{i_\varsigma}(s)|\Delta T_{\ell,\nu}\label{MVTxvariation1}\\
   |y_{i_\varsigma}(\tau_{k+1}^{\ell\nu})-y_{i_\varsigma}(\tau_{k}^{\ell\nu})|\leq & \sup_{\tau_{k}^{\ell\nu}< s < \tau_{k+1}^{\ell\nu}}|\dot{y}_{i_\varsigma}(s)|\Delta T_{\ell,\nu},\nonumber\\
   &\qquad\: (k=1,2,\cdots, \varpi-1)\label{MVTxvariation2}\\
   |y_{i_\varsigma}(\tau_{1}^{\ell\nu})-y_{i_\varsigma}\left(T_{\ell \nu}^d\right)|\leq & \sup_{T_{\ell \nu}^d < s < \tau_{1}^{\ell\nu}}|\dot{y}_{i_\varsigma}(s)|\Delta T_{\ell,\nu}\label{MVTxvariation3}
\end{align}
for all $t\in[T_{\ell \nu}^d,T_{\ell,\nu+1}^a]$, $\varsigma\in\{1,2\cdots,r\}$ with $\Delta T_{\ell,\nu}:=T_{\ell,\nu+1}^a-T_{\ell \nu}^d$. Since $\lim_{t\rightarrow\infty}\dot{y}_i(t)=0$ there exists time $T_{i_\varsigma}(\epsilon)>0$ such that $|\dot{y}_{i_\varsigma}(t)|\leq \epsilon/(4(\varrho+1)\tau_{\max})$ for all $t\geq T_{i_\varsigma}(\epsilon)$ with $t\neq t_j$ for all $j\in \mathbb{N}_+$,  $\varsigma=1,2,\cdots,r$. Then, from \eqref{variation_decomposition}-\eqref{MVTxvariation3}  we obtain
\begin{equation}\label{jvariation_nonactivated}
   |y_{i_\varsigma}(t)-y_{i_\varsigma}(T_{\ell \nu}^d)|\leq \frac{(\varpi+1)\epsilon}{4(\varrho+1)\tau_{\max}}(T_{\ell,\nu+1}^a-T_{\ell \nu}^d)\leq \frac{\epsilon}{4}
\end{equation}
for all  $\nu\in\mathbb{N}_+$ such that $T_{\ell \nu}^d\geq T_{i_\varsigma}(\epsilon)$, $\varsigma\in\{1,2\cdots,r\}$.
Using now the triangle inequality with \eqref{jk_x_agreement_activated_topology_all_nodes} and \eqref{jvariation_nonactivated} for $\varsigma=m,n$ we result in
\begin{align}\label{jk_x_agreement_nonactivated_topology}
   |y_{i_m}(t)-y_{i_n}(t)|\leq &  |y_{i_m}(t)-y_{i_m}(T_{\ell \nu}^d)| \nonumber\\&+|y_{i_n}(t)-y_{i_n}(T_{\ell \nu}^d)|\nonumber\\ &+|y_{i_m}(T_{\ell \nu}^d)-y_{i_n}(T_{\ell \nu}^d)|\leq \epsilon
\end{align}
for all $t\in[T_{\ell \nu}^d,T_{\ell,\nu+1}^a]$ with sufficiently large $\nu$ ($\nu\geq \max_{1\leq m,n\leq r}N_{i_m i_{n}}$, $T_{\ell \nu}^d\geq \max\{T_{i_m}(\epsilon),T_{i_n}(\epsilon)\}$). Combining \eqref{jk_x_agreement_activated_topology_all_nodes} and \eqref{jk_x_agreement_nonactivated_topology} we arrive at
\begin{equation}\label{consensus_among_agents_of_scc}
  \lim_{t\rightarrow\infty}(y_{i_m}(t)-y_{i_n}(t))=0
\end{equation}
for every set of agents $i_m,i_n$ which belong to a basis bicomponent of some topology. Since the switching topologies  have a jointly
strongly connected basis,  for every $i,k\in\{1,2,\cdots,N\}$ there exist agents $\sigma_1,\cdots,\sigma_{\kappa}$ such that the pairs $(i,\sigma_1)$, $(\sigma_1,\sigma_2)$, $\cdots$, $(\sigma_{\kappa-1},\sigma_{\kappa})$, $(\sigma_{\kappa},k)$ have edges over basis bicomponents of the switching topologies. Thus, from \eqref{consensus_among_agents_of_scc} we have that
$\lim_{t\rightarrow\infty}(y_{i}(t)-y_{\sigma_1}(t))=0$, $\lim_{t\rightarrow\infty}(y_{\sigma_w}(t)-y_{\sigma_{w+1}}(t))=0,\text{ }(1\leq w\leq \kappa-1)$ and $\lim_{t\rightarrow\infty}(y_{\sigma_{\kappa}}(t)-y_{k}(t))=0$. If we define $\sigma_0:=i$, $\sigma_{\kappa+1}:=k$ then
\begin{align*}
  \lim_{t\rightarrow\infty}(y_{i}(t)-y_{k}(t))=  \sum_{w=0}^{\kappa}\lim_{t\rightarrow\infty}(y_{\sigma_w}(t)-y_{\sigma_{w+1}}(t))=0
\end{align*}
for all $i,k\in \{1,2,\cdots ,N\}$ and the proof is completed.
\end{proof}


\end{document}